\documentclass[11pt]{article}
\usepackage[margin=0.75in]{geometry}
\pdfoutput=1
\usepackage[utf8]{inputenc}
\usepackage[english]{babel}
\usepackage[T1]{fontenc}
\usepackage{amsmath,amssymb}
\usepackage{graphicx}           
\usepackage{relsize}
\usepackage{tcolorbox}
\usepackage{amsthm}
\usepackage{thm-restate}
\usepackage{braket}
\usepackage{csquotes}

\usepackage{soul}
\usepackage{mathtools}

\usepackage[all]{xy}

\usepackage{cite}

\theoremstyle{definition}
\usepackage{float}
\makeatletter
\newlength\min@xx

\makeatother

\usepackage[figurename=Fig.,font=footnotesize]{caption}

\usepackage{hyperref}
\hypersetup{
    colorlinks=true,
    linkcolor=blue,
    citecolor=magenta,
    filecolor=magenta,      
    urlcolor=blue,
    pdftitle={},
    pdfpagemode=FullScreen,
    }

\usepackage{lipsum}

\newtheorem{theorem}{Theorem}
\newtheorem{lemma}{Lemma}
\newtheorem{corollary}{Corollary}
\newtheorem{definition}{Definition}
\newtheorem*{remark}{Remark}
\newtheorem{proposition}{Proposition}
\newtheorem{example}{Example}

\def\ket#1{| #1 \rangle}

\newcommand{\bbC}{\mathbb{C}}
\newcommand{\bbZ}{\mathbb{Z}}
\newcommand{\bbT}{\mathbb{T}}

\newcommand{\calC}{\mathcal{C}}

\newcommand{\calG}{\mathcal{G}}
\newcommand{\calH}{\mathcal{H}}
\newcommand{\calK}{\mathcal{K}}
\newcommand{\calL}{\mathcal{L}}
\newcommand{\calN}{\mathcal{N}}
\newcommand{\calP}{\mathcal{P}}

\newcommand{\calU}{\mathcal{U}}

\newcommand{\Ind}{\operatorname{Ind}}
\newcommand{\Res}{\operatorname{Res}}
\newcommand{\Hom}{\operatorname{Hom}}
\newcommand{\Span}{\operatorname{span}}
\newcommand{\Tr}{\operatorname{Tr}}
\newcommand{\id}{\operatorname{id}}

\DeclareMathOperator{\wt}{wt}

\title{Hybrid Clifford Codes via Operator Algebra Quantum Error Correction and Projective Representation Theory}
\author{Jonas Eidesen$^1$, David W. Kribs$^{2}$\footnote{Corresponding author: \href{mailto:dkribs@uoguelph.ca}{dkribs@uoguelph.ca}} and Andrew Nemec$^3$}
\date{\small
$^1$Department of Mathematics, University of Oslo, 
P.O.Box 1053 Blindern,
0316 Oslo,
Norway \\
$^2$Department of Mathematics \& Statistics, University of Guelph, Guelph, ON N1G 2W1, Canada \\ 
$^3$Department of Computer Science, University of Texas at Dallas, 890 Franklyn Jenifer Drive, Richardson, TX 75080 USA \\
\normalsize \today
}

\begin{document}

\maketitle

\begin{abstract}
Clifford codes are a natural generalization of quantum stabilizer codes based primarily on representation theory. This class of codes has previously been extended to the setting of quantum subsystem codes. We formulate a two-fold generalization of Clifford codes, for both the hybrid classical and quantum information and projective representation theory settings. This leads to new classes of hybrid subspace and subsystem Clifford codes. We extend the fundamental representation theoretic quantum error correction theorem to include these codes, based on the operator algebra quantum error correction framework. We also discuss several examples throughout the presentation, of both stabilizer and non-stabilizer type. 
\end{abstract}

\section{Introduction}

Quantum error correction has grown over the past three decades to become both central to the ongoing and continued development of quantum technologies, and a core area in quantum information theory \cite{nielsen_chuang_2010,lidar2013quantum}. Recent high-profile advances in the quantum computing industry have been based on several important factors, with a strong foundation of theoretical quantum error correction being one of them. Future advances will surely be built on additional theoretical advances, and so there are compelling reasons to broaden and deepen the theory further, including in quantum error correction. 

Historically, the most important class of quantum codes has been the class of stabilizer codes \cite{gottesman1996class,gottesman1997stabilizer}, originally build for error models made up of operators from the Pauli group. An early generalization of these codes extended the formulation of stabilizer codes based on a group representation theory approach, and this led to what are now called Clifford codes \cite{KnillI96,KnillII96,KlappeneckerRotteler02,KlappeneckerRottelerII02}. Subsequently, the original class of Clifford codes were generalized to a more exotic type of quantum code called subsystem codes \cite{klappenecker2008clifford,klappenecker2010clifford}, which have been intensely investigated over the past two decades \cite{bacon2006operator,Poulin2005Stabilizer,Kribs2005Unified}. 
In a different direction, including more recent work, error correction approaches have been developed for hybrid classical and quantum codes \cite{devetak2005capacity,hsieh2010entanglement,hsieh2010trading,grassl2017codes,li2020error,cao2021higher,nemec2021infinite,nemec2022encoding,majidy2018unification}, which have a strong connection with the theory of operator algebras \cite{kuperberg2003capacity,beny2009quantum,crann2016private}. Of particular relevance to the present work, we mention the operator algebra quantum error correction framework \cite{Beny2007Generalization,Beny2007Quantum}, and the recent extension of the stabilizer formalism to a setting that includes hybrid (subspace and subsystem) codes \cite{DauphinaiKribsVasmer2024,NadkarniAdonsouDauphinaisKribsVasmer2024}. 

Taken together, these earlier and more recent motivations suggest an extension of the class of Clifford codes to hybrid and operator algebra error correction settings would be worthwhile and should be possible. In this paper, we introduce such an extension of Clifford codes, one that embraces hybrid versions of the codes and more directly links them with operator algebra theory. Furthermore, our generalization is two-fold, as we introduce another new element into the study of these codes: projective representation theory, which has a natural fit in quantum information theory, but has only recently been rigorously applied in the subject \cite{Eidesen2025}. 
Following our formulation of the code class and establishing basic results, we establish an extension of the fundamental quantum error correction theorem to the setting. 
We also present several examples, of both stabilizer and non-stabilizer type, which we carry through the presentation, and we include a related example that we show is not a Clifford code. We also introduce a notion of distance for the codes.   

This paper is organized as follows. The next section includes a brief review of notions required for our work, including an extended review of key notions from projective representation theory. We then introduce a collection of examples in Section~3 that will be revisited throughout the rest of the paper. In Section~4, we formulate and define hybrid (subspace and subsystem) Clifford codes, based on the projective representation theory approach. In Section~5, we use the operator algebra quantum error correction framework to prove the error correction theorem for the codes, deriving relavent details on the code space projections as part of the analysis. We finish in Section~6 with some concluding remarks.

\section{Preliminaries}\label{prelims}

In quantum coding theory \cite{knill1997theory,nielsen_chuang_2010,lidar2013quantum}, one is typically  interested in finite-dimensional Hilbert spaces $\mathcal H$ made up of multiple qubit (i.e., $\mathbb C^2$) or qudit (i.e., $\mathbb C^d$ with $d \geq 2$) subsystems, and so $\mathcal H = (\mathbb C^d)^{\otimes n}$. This is called $n$-qubit or $n$-qudit Hilbert space, and we denote the standard (computational) orthonormal basis by $\{ \ket{i_1 \cdots i_d} \, : \, 0 \leq i_j \leq d-1\}$, with the convention $\ket{i_1 \cdots i_d} := \ket{i_1} \otimes \cdots \otimes \ket{i_d}$. Quantum codes are usually identified with subspaces of a quantum system Hilbert space, with generalized notions defined by a subsystem of a subspace (for subsystem codes), or a collection of orthogonal subspaces (for hybrid codes). 
We will introduce the specific code families and the relevant theoretical aspects of quantum error correction for our investigation in the sections that follow; here we simply include some introductory notes. The original formulation of Clifford codes \cite{KnillI96,KnillII96,KlappeneckerRotteler02,KlappeneckerRottelerII02} in quantum error correction came from a group representation theory motivated generalization of stabilizer codes \cite{gottesman1996class,gottesman1997stabilizer}, which are historically the most important class of quantum codes. Subsequently, they were generalized to subsystem code constructions \cite{klappenecker2008clifford,klappenecker2010clifford}, which is an important extension of regular (subspace) quantum codes that have grown in importance over the past two decades \cite{bacon2006operator,Poulin2005Stabilizer,Kribs2005Unified}. 

We are thus motivated to consider finite groups $G$ represented on finite-dimensional Hilbert space $\mathcal H$. The set of (linear) operators on $\mathcal H$ will be denoted by $\mathcal L(\mathcal H)$, and the group of unitary operators by $\mathcal U(\mathcal H)$. The set of $N\times N$ complex matrices is denoted $M_N$, with identity operator $I_N$. If $\dim \mathcal H = N$, then $\mathcal L(\mathcal H)$ can be identified with $M_N$ via matrix representations in a fixed orthonormal basis for $\mathcal H$. 
Given a subgroup of unitary operators $\mathcal L$ inside $\mathcal U(\mathcal H)$, let  $\mathcal A$ be the subalgebra of $\mathcal L(\mathcal H)$ generated by $\mathcal L$, which is equal to the set of complex polynomials in the elements of $\mathcal L$. The algebra $\mathcal A$ is a finite-dimensional von Neumann algebra or (unital) C$^*$-algebra \cite{davidson1996c,paulsen2002completely}. From the structure theory for such algebras, $\mathcal A$ is unitarily equivalent to a direct sum of the form 
\[
\mathcal A \simeq \bigoplus_i (M_{m_i} \otimes I_{n_i}), 
\]
for some positive integers $m_i, n_i$  with $\sum_i m_i n_i = \dim \mathcal H$. This unitary equivalence defines a decomposition of the Hilbert space $\mathcal H$ as an orthogonal direct sum of subspaces, each with its own corresponding tensor decomposition, $\mathcal H = \oplus_i (A_i \otimes B_i)$, in which the algebra decomposes as $\mathcal A = \oplus_i (\calL(A_i) \otimes I_{B_i})$. In addition, the commutant $\mathcal A^\prime$ of the algebra, the algebra of all operators that commute with all elements of $\mathcal A$, is unitarily equivalent to 
\[
    \mathcal A^\prime \simeq \bigoplus_i (I_{m_i} \otimes M_{n_i}), 
\] 
which is also determined by the structure of the Hilbert space decomposition as ${\mathcal A}^\prime = \oplus_i (I_{A_i} \otimes \calL(B_i))$. 

Operator algebras have long been recognized as a vehicle for encoding hybrid forms of classical and quantum information \cite{kuperberg2003capacity}.
The basic idea is that quantum codes can be encoded into each of the non-trivial matrix algebras in the above sum, and the orthogonality of the algebras in the summand can be used to also include classical information. 
A generalized approach called `operator algebra quantum error correction' \cite{Beny2007Generalization,Beny2007Quantum}, was built on this idea and provided a framework for considering the simultaneous encoding of classical and quantum information \cite{devetak2005capacity,hsieh2010entanglement,hsieh2010trading,grassl2017codes,li2020error,cao2021higher,nemec2021infinite,majidy2018unification}, and for building error correction protocols that also extend to the infinite-dimensional setting \cite{beny2009quantum,crann2016private}.

\subsection{Projective Representation Theory}

As projective representation theory has only recently been considered in a quantum error correction context \cite{Eidesen2025}, for the rest of this section we will review key elements from the theory (for further details see \cite{CeccheriniSilbersteinTullio22} as well). We have also included a remark below on the use of this theory in quantum error correction. 

\begin{definition}\label{projrepdefn}
    Let $G$ be a finite group. A {\it projective representation} of $G$ on a finite-dimensional Hilbert space $\calH$ is a function $\pi \colon G \to \calU(\calH)$, where $\calU(\calH)$ is the unitary group on $\mathcal H$, such that the composition $q \circ \pi \colon G \to P\calU(\calH)$ is a group homomorphism, where $P\calU(\calH)$ is the quotient of $\calU(\calH)$ by scalar multiples of the identity and $q \colon \calU(\calH) \to P\calU(\calH)$ is the quotient map. We say that $\pi$ is \emph{projectively faithful} if the composition $q \circ \pi$ is injective. 
\end{definition}

\begin{remark}\label{rem:reason for projective}
    The reason it is desirable to work with \emph{projective} representations in quantum information is as follows. To implement measurements in quantum mechanics one must compute various inner products  $\langle\varphi\vert\psi\rangle$. This quantity remains unchanged if we change $\ket{\psi}$ and $\ket{\varphi}$ by the same global phase, i.e. if we act by a unimodular scalar multiple of the identity on $\calH$. Thus, all the unimodular scalar multiples of the identity are irrelevant information for us to remember, in other words, the physics we want to describe is completely captured by the group homomorphism $q \circ \pi \colon G \to P\calU(\calH)$.

    A natural question might be, why do we not simply work with these group homomorphisms to $P\calU(\calH)$ directly? The answer is that this does not constitute a good representation theory, certain theoretical results simply do not hold for such group homomorphisms, but they do hold for projective representations. An example of such a phenomenon is highlighted in \cite[Section 3]{Eidesen2025}, where it is shown that it is possible to have inequivalent projective representations $\pi_1$ and $\pi_2$ such that $q \circ \pi_1 = q \circ \pi_2$.
\end{remark}

That the composition $q \circ \pi$ in Definition~\ref{projrepdefn} is a group homomorphism is equivalent to the existence of a function $\sigma \colon G \times G \to \bbT$, where $\bbT$ is the unit circle in the complex plane, such that
\begin{equation}\label{projectiverelation}
    \pi(x)\pi(y) = \sigma(x,y)\pi(xy), \text{ for all } x,y \in G.
\end{equation}
We note that this differs slightly from the definition in \cite{CeccheriniSilbersteinTullio22} where the $\sigma$-function is conjugated. 

One can readily verify from Eq.~(\ref{projectiverelation}) that $\sigma$ satisfies the following cocycle equation for all $x,y,z \in G$:
\begin{equation}\label{cocycle}
    \sigma(x,y)\sigma(xy,z) = \sigma(x,yz)\sigma(y,z),
\end{equation}
the function $\sigma$ is therefore aptly named a \emph{cocycle}. We will often use the notation $\sigma_\pi$ to denote the cocycle associated to the projective representation $\pi$. Write the identity element of $G$ as $e$ and note that $\sigma(x,e)$ and $\sigma(e,x)$ are equal and the same constant for all $x\in G$ by \eqref{cocycle}. So, without loss of generality we will assume $\sigma(x,e) = \sigma(e,x) = 1$ for all $x\in G$. This in turn implies from the cocycle equation that $\sigma(x,x^{-1}) = \sigma(x^{-1}, x)$ for all $x\in G$. The assumption also implies that $\pi(e)$ is a projection (i.e., not just a unimodular scalar multiple of a projection, allowing us to avoid having to deal with that degeneracy), which is the identity operator when $\pi$ is irreducible. Further, for all $x\in G$, we have 
\begin{equation}\label{inverse}
\pi(x^{-1}) = \sigma(x,x^{-1}) \pi(x)^{-1} = \sigma(x^{-1}, x) \pi(x)^{-1}.
\end{equation}

We will need a few more notions from projective representation theory for our analysis, but we will keep the presentation minimal and encourage the interested reader to follow up with \cite{CeccheriniSilbersteinTullio22,Eidesen2025} for more details. 

If $\pi_i \colon G \to \calU(\calH_i)$, $i = 1,2$, are two $\sigma$-projective representations of $G$, we define an \emph{intertwiner} from $\pi_1$ to $\pi_2$ to be a linear map $T \colon \calH_1 \to \calH_2$ such that 
\[
    T\pi_1(x) = \pi_2(x)T
\]
for all $x \in G$. We will often use the notation $T \colon \pi_1 \to \pi_2$ to emphasize that $T$ is an intertwiner and not only a linear map. The collection of all intertwiners between $\pi_1$ and $\pi_2$ is denoted by $\Hom_G(\pi_1,\pi_2)$, and it is clear that this has the natural structure of a complex vector space. We say that $\pi_1$ is isomorphic to $\pi_2$ if there exists an intertwiner $T \in \Hom_G(\pi_1,\pi_2)$ that is also a bijection. In the case where $\pi_1$ is isomorphic to $\pi_2$ we will sometimes write $\pi_1 \simeq \pi_2$.

The collection of $\sigma$-projective representations of $G$ along with intertwiners form a category. This category can be identified with modules over $\bbC[G]^\sigma$, where $\bbC[G]^\sigma$ denotes the \emph{$\sigma$-twisted group algebra of $G$}. As a complex vector space it is defined to be the collection of functions $F \colon G \to \bbC$. 
We can endow $\bbC[G]^\sigma$ with an inner product defined by
\begin{equation}\label{eqn:inner product}
    \langle F_1, F_2 \rangle = \frac{1}{|G|} \sum_{x \in G}F_1(x) \overline{F_2(x)}, \text{ for } F_1,F_2 \in \bbC[G]^\sigma.
\end{equation}
With this $\bbC[G]^\sigma$ may also be regarded as a Hilbert space, which is often denoted by $\ell^2(G)$. Note that $\bbC[G]^\sigma$ has a canonical basis given by the elements of $G$, i.e. we can think of $g \in G$ as a function to the complex numbers defined by
\[
    g(x) =
    \begin{cases}
        1, & x = g, \\
        0, & x \neq g.
    \end{cases}
\]

If $\pi$ is a projective representation of a group $G$ on a Hilbert space $\calH$, we say that a subspace $\calC$ of $\calH$ is \emph{$\pi$-invariant} if $\pi(x)\calC \subset \calC$ for all $x \in G$. Then $\pi$ is said to be \emph{irreducible} if the only $\pi$-invariant subspaces of $\calH$ are $\calH$ and $\{0\}$; in particular, $\pi$ is irreducible if and only if $\Span\{\pi(x) \colon x \in G \} = \calL(\calH)$. If $\calC$ is a $\pi$-invariant we will use the notation $\pi|_\calC$ to denote the projective representation attained by restricting $\pi(x)$ to $\calC$ for every $x \in G$. If $P_\calC$ denotes the orthogonal projection onto $\calC$ then
\[
    \pi|_\calC(x) = P_\calC\pi(x)P_\calC.
\]

Projective representations satisfy analogs of Maschke's Theorem and Schur's Lemma, cf. \cite[Corollary 7.15]{CeccheriniSilbersteinTullio22}. Maschke's Theorem says that any projective representation $\pi$ of $G$ is isomorphic to a direct sum of irreducible projective representations. Schur's Lemma states that if $\pi_1$ and $\pi_2$ are irreducible projective representations of $G$, then
\begin{equation*}
    \dim \Hom_G(\pi_1,\pi_2) =
    \begin{cases}
        1 & \text{if } \pi_1 \simeq \pi_2, \\
        0 & \text{if } \pi_1 \not\simeq \pi_2.
    \end{cases}
\end{equation*}

Projective representations also enjoy a character theory which can be used to make powerful arguments. We will need the following result, see \cite[Proposition 2.2]{Cheng15}. If $\pi_1$ and $\pi_2$ are $\sigma$-projective representations of $G$ (not necessarily irreducible), then
\begin{equation}\label{eqn:inner porduct of characters}
    \langle \chi_{\pi_1}, \chi_{\pi_2} \rangle = \dim \Hom_G(\pi_1,\pi_2).
\end{equation}
Here the inner product is the one defined in equation \eqref{eqn:inner product}, and for $i = 1,2$, $\chi_{\pi_i} \in \bbC[G]^\sigma$ are the functions defined by
\begin{equation*}
    \chi_{\pi_i}(x) = \Tr(\pi_i(x)).
\end{equation*}

\subsubsection{Induced projective representations}

Suppose that $H$ is a subgroup of $G$ and $\pi$ is a $\sigma$-projective representation of $G$ on $\calH$. We define a projective representation $\Res^G_H \pi$ of $H$ on $\calH$ by letting
\begin{equation*}
    \Res^G_H\pi(x) = \pi(x), \text{ for } x \in H.
\end{equation*}
We have that $\Res^G_H \pi$ is $\Res^G_H\sigma$-projective, where $\Res^G_H\sigma$ is defined by restricting $\sigma$ to $H \times H$. 

Induction is a procedure that is adjoint to restriction. Concretely: if $\theta$ is a $\Res^G_H\sigma$-projective representation of $H$ on $\calK$, we define a $\sigma$-projective representation $\Ind_H^G\theta$ of $G$ on $\Ind_H^G(\calK)$, where $\Ind_H^G(\calK)$ is the balanced tensor product:
\begin{equation*}
    \Ind_H^G(\calK) := \bbC[G]^\sigma \otimes_H \calK.
\end{equation*}
The subscript $H$ indicates that for any $F \in \bbC[G]^\sigma$, $\ket{\psi} \in \calK$, and $x \in H$ we have that
\begin{equation*}
    F \otimes (\theta(x) \ket{\psi}) = (F *_\sigma x) \otimes \ket{\psi}.
\end{equation*}
If $\{\ket{i}\}_{i = 1}^n$ is a basis for $\calK$, this property ensures that $\{ r \otimes \ket{i} : r \in G/H, \ i \in \{1,\dots,n\} \}$ is a basis for $\Ind_H^G(\calK)$.
Then for any $x \in G$ we define $\Ind_H^G\theta(x)$ by linearly extending the following identity:
\begin{equation*}
    (\Ind_H^G\theta(x))(F \otimes \ket{\psi}) = (x *_\sigma F) \otimes \ket{\psi}, \text{ for } F \in \bbC[G]^\sigma, \ket{\psi} \in \calK.
\end{equation*}
Since the $\sigma$-twisted convolution product in $\bbC[G]^\sigma$ is associative, $\Ind^G_H\theta \colon G \to \calU(\Ind^G_H(\calK))$ indeed defines a $\sigma$-projective representation of $G$.

Induced representations satisfy Frobenius reciprocity: there is a natural isomorphism
\begin{equation*}
    \Hom_G(\Ind_H^G\theta ,\pi) \simeq \Hom_H(\theta, \Res^G_H\pi)
\end{equation*}
for any $\sigma$-projective representation $\pi$ of $G$ and any $\Res^G_H\sigma$-projective representation $\theta$ of $H$, cf. \cite[Theorem 8.15]{CeccheriniSilbersteinTullio22}.

\section{Examples of Projective Error Models}\label{section:examples}

In this section, we will introduce a number of examples and note the different projective representations and cocycles. We will revisit these examples at certain places below to help elucidate different concepts and results. 
We first recall the following definition from \cite{Eidesen2025}.

\begin{definition}\label{def:projective error model}
    Let $\calH$ be a finite dimensional Hilbert space. A \emph{projective error model on $\calH$} is given by a pair $(G,\pi)$ where $G$ is a finite group and $\pi$ is a projectively faithful, irreducible, projective representation of $G$ on $\calH$.
\end{definition}

\begin{remark}\label{rem:relation to nice error frames/basis}
    Note that the data of a projective error model is essentially the same as that of a \emph{nice error frame} introduced in \cite{ChienWaldron17}. In the special case where $|G| = (\dim\calH)^2$, we retrieve what is essentially the same data as a \emph{nice error basis}, cf.~\cite{KnillI96, KnillII96}.
\end{remark}

Every faithful irreducible unitary representation of a finite group defines a faithful irreducible projective unitary representation, after factoring out the scalar multiples of the identity. This helps motivate the most well-known example of an error model. 

\begin{example}\label{example:Pauli projective error model}
    Consider the $n$-qubit Pauli group $G= P_n$, for integers $n\geq 1$, which is the unitary subgroup on $\mathcal H = (\mathbb C^2)^{\otimes n}$ generated by the scalars $\langle iI\rangle$ and $n$-tensors of the operators, 
        \begin{equation*}
            X =
            \begin{bmatrix}
                0 & 1 \\
                1 & 0
            \end{bmatrix}
            \text{ and }
            Z =
            \begin{bmatrix}
                1 & 0 \\
                0 & -1
            \end{bmatrix} .
        \end{equation*}
        
    We will refer to $P_n$ here as the {\it Pauli error model}. Then a projective error model associated with this error model is given by $(G,\pi)$ where $G = (\mathbb Z_2 \times \mathbb Z_2)^n$ and $\pi$ is the $n$-fold product representation of the (unique) projectively faithful irreducible projective representation of $\mathbb Z_2 \times \mathbb Z_2$. To be concrete, we define 
    $\pi \colon (\mathbb Z_2 \times \mathbb Z_2)^n \to \calU((\mathbb C^2)^{\otimes n})$ by 
    \begin{equation}\label{canonical} 
        \pi (a_1,b_1,\ldots , a_n, b_n) = X^{a_1} Z^{b_1} \otimes \ldots \otimes X^{a_n} Z^{b_n}. 
    \end{equation} 
    In this case, the associated cocycle is the power of -1 given by the appropriately defined scalar product of elements in the group, reflecting the anti-commutation relation $XZ = -ZX$. Concretely,
    \[
        \sigma_{\pi}(a_1,b_1,\ldots,a_n,b_n,c_1,d_1,\ldots,c_n,d_n) = \prod_{i = 1}^{n} (-1)^{b_ic_i}.
    \]
\end{example}

The following example is motivated by recent work in quantum error correction. Certain error models and codes defined by the operators lead to what is called the $XP$ (and $XS$ when $d=4$) stabilizer formalism \cite{ni2015non,Webster2022}. This error model was also investigated in the projective representation theory context in \cite{Eidesen2025}.  

\begin{example}\label{example:XP-projective error model}
    Let $d \geq 2$ be an integer, $\zeta_d \in \bbC$ be a primitive $d$-th root of unity, and consider the dihedral group of $2d$ elements with the following presentation:
    \begin{equation*}
        D_d 
        = \langle a,b \, \vert \, a^d = b^2 = 1, \, bab = a^{d-1} \rangle 
        = \{ 1, a, a^2, \dots, a^{d-1}, b, ba, \dots, ba^{d-1} \}.
    \end{equation*}
    Then the function $\pi \colon D_d \to \calU(\bbC^2)$ defined by
    \begin{equation}\label{xpexample}
        \pi(b^ka^l) = X^k P^l, \text{ for } k = 0,1 \text{ and } l = 0,1,\dots,d-1,
    \end{equation}
    where 
    \begin{equation*}
        X =
        \begin{bmatrix}
            0 & 1 \\
            1 & 0
        \end{bmatrix},
        \text{ and }
        P =
        \begin{bmatrix}
            1 & 0 \\
            0 & \zeta_d
        \end{bmatrix},
    \end{equation*}
    is a projectively faithful irreducible projective representation of $D_d$ on $\bbC^2$.

In this case, the cocycle associated with the projective representation is given by $\sigma: D_d \times D_d \to \mathbb T$, where 
\[
\sigma ( b^{k_1} a^{l_1} , b^{k_2} a^{l_2}  )  = \zeta_d^{k_2 l_1}. 
\]
    Additionally, as in the Pauli group case above, we can consider $n$-qubit extension of the representation $\pi^{\otimes n} \colon (D_d)^{\otimes n} \to \calU((\bbC^2)^{\otimes n})$, and the corresponding projective error models. 
\end{example}

The following example is found in \cite[Proposition 8.1]{Eidesen2025}, which we also present here in an alternate form. Note that inserting $d = 2$ in the following example retrieves a well-known example of Klappenecker and Rötteler, cf. \cite[Section 10.9]{KlappeneckerRotteler02}.

\begin{example}\label{cdexample}
    Let $d \geq 2$ be an integer and consider the group $G := C_2 \times D_{2d}$ with the following presentation:
    \begin{equation*}
        G = \langle a, b, c \, \vert \, a^{2d} = b^2 = c^2 = [a,c] = [b,c] = 1, \, bab = a^{-1} \rangle.
    \end{equation*}
    Note that $c$ generates $C_2$, while $a$ and $b$ generate $D_{2d}$.
    
    Define the following matricies:
    \begin{equation*}
        X =
        \begin{bmatrix}
            0 & 1 \\
            1 & 0
        \end{bmatrix}, \quad
        Z =
        \begin{bmatrix}
            1 & 0 \\
            0 & -1
        \end{bmatrix}, \quad
        P =
        \begin{bmatrix}
            1 & 0 \\
            0 & \zeta_{2d}
        \end{bmatrix}.
    \end{equation*}
    Here $\zeta_{2d} = e^{2\pi i/2d}$ is a $2d$-th primitive root of unity. We define a projective representation $\pi \colon G \to \calU(\bbC^2 \otimes \bbC^2)$ by
    \begin{equation}\label{cdexampleeqn}
        \pi(c^kb^la^m) = X^kZ^m \otimes X^lP^m.
    \end{equation}
    The pair $(G, \pi)$ is then a projective error model on $\bbC^2 \otimes \bbC^2$, cf. \cite[Proposition 8.1]{Eidesen2025} for a proof. The cocycle, $\sigma_\pi$, associated to this projective representation is the following:
    \begin{equation*}
        \sigma_\pi(c^{k_1}b^{l_1}a^{m_1}, c^{k_2}b^{l_2}a^{m_2}) = (-1)^{m_1k_2}(\zeta_{2n})^{m_1l_2}.
    \end{equation*}
Further, we can consider multi-qubit versions of these representations as in the examples above, except in this case each subsystem is a two-qubit system. 
\end{example}

The following example is also found in \cite[Proposition 8.2]{Eidesen2025}, which we present here in an alternate form.

\begin{example}\label{zlexample}
    Let $n \geq 3$ be an odd number. Let $\bbZ_2$ act on $\bbZ_n \times \bbZ_n$ via inversion, and let $L := (\bbZ_n \times \bbZ_n) \rtimes \bbZ_2$ denote the associated semi-direct product. Let $G := \bbZ_2 \times L$. Set
    \begin{equation*}
        X_n =
        \begin{bmatrix}
            0 & 1 & 0 & \cdots & 0 & 0 \\
            0 & 0 & 1 & \cdots & 0 & 0 \\
            0 & 0 & 0 & \cdots & 0 & 0 \\
            \vdots & \vdots & \vdots & \ddots & \vdots & \vdots \\
            0 & 0 & 0 & \cdots & 0 & 1 \\
            1 & 0 & 0 & \cdots & 0 & 0
        \end{bmatrix}, \,
        Z_n =
        \begin{bmatrix}
            1 & 0 & 0 & \cdots & 0 & 0 \\
            0 & \zeta_n & 0 & \cdots & 0 & 0 \\
            0 & 0 & \zeta_n^2 & \cdots & 0 & 0 \\
            \vdots & \vdots & \vdots & \ddots & \vdots & \vdots \\
            0 & 0 & 0 & \cdots & \zeta_n^{n-2} & 0 \\
            0 & 0 & 0 & \cdots & 0 & \zeta_n^{n-1}
        \end{bmatrix},
        \text{ and }
        C =
        \begin{bmatrix}
            0 & 0 & \cdots & 0 & 1 \\
            0 & 0 & \cdots & 1 & 0 \\
            \vdots & \vdots & \ddots & \vdots & \vdots \\
            0 & 1 & \cdots & 0 & 0 \\
            1 & 0 & \cdots & 0 & 0
        \end{bmatrix},
    \end{equation*}
     and define $\pi \colon G \to \calU(\bbC^2 \otimes \bbC^n)$ by
    \begin{equation}\label{zlexampleeqn}
        \pi(a,b,c,d) = X^a Z^d \otimes (X_n)^b (Z_n)^c C^d.
    \end{equation}
    Here $X$ and $Z$ are the usual Pauli $X$ and Pauli $Z$. Note that $a,d \in \{0,1\}$ while $b,c \in \{0,1,...,n-1\}$. The pair $(G,\pi)$ is then a projective error model on $\bbC^2 \otimes \bbC^n$, cf. \cite[Proposition 8.2]{Eidesen2025} for a proof of this fact. The cocycle, $\sigma_\pi$, associated to this projective representation is given by
    \begin{equation*}
        \sigma_\pi(a_1,b_1,c_1,d_1,a_2,b_2,c_2,d_2) = (-1)^{d_1a_2}(\zeta_n)^{-(-1)^{d_1}c_1b_2 - d_1c_2}.
    \end{equation*}

As above, we can consider multi-qubit versions of these representations, and in this case each subsystem is $2n$-dimensional. 
\end{example}

\section{Hybrid Clifford Subspace and Subsystem Codes}

In this section we formulate our class of Clifford codes, with subclasses defined sequentially as extra features are added in the discussion. 
Reflecting one of our generalization directions, let us first briefly review key elements of the recent construction of hybrid subspace and subsystem codes from \cite{DauphinaiKribsVasmer2024} based on the operator algebra quantum error correction approach. 

In \cite{DauphinaiKribsVasmer2024}, codes are constructed from the following algebraic data of the Pauli group $P_n$. The initial data is the group of stabilizers, $S$, which is an abelian subgroup of $P_n$ that does not contain $-I$. From this one considers a splitting of the normalizer group $\calN(S)$ into a cartesian product $\calN(S) \simeq \calL \times \calG$. The group $\calL$ is the group of (bare) logical operators, and $\calG$ is the so-called gauge group. Then, let $T \subset \calP_n$ be a choice of representatives of the (left) cosets $P_n/\calN(S)$, a so-called coset transversal, and let $T_0 \subseteq T$ be a subset. A code $\calC = \calC(S,\calG,\calL,T_0)$ is then defined to be the following subspace of $\calH = (\bbC^2)^{\otimes n}$,
\begin{equation*}
    \calC(S,\calG,\calL,T_0) = \bigoplus_{U \in T_0} U(A \otimes B),
\end{equation*}
where the subspace $\calC = A \otimes B$ is the stabilizer code associated to $S$, the subsystem structure is attained by the splitting $\calN(S) \simeq \calL \times \calG$, and the hybrid classical feature is given by the transversal subset $T_0$ as discussed further below.

We begin by defining a pair of subgroups of the given group that are key to our code constructions. Note that we use the term `dressed' logical operators here, because when we introduce subsystem structures below it will be evident that, as in other subsystem code settings, these operators can act non-trivially on both logical and gauge qubits. 

\begin{definition}\label{def:stabilizer and logical group of a code}
    Let $\calH$ be a finite dimensional Hilbert space and $(G,\pi)$ be a projective error model on $\calH$. If $\calC \subseteq \calH$ is a subspace (i.e. a quantum code) we define the following subgroups of $G$:
    \begin{align*}
        S_{(G,\pi)}(\calC) & := \{ x \in G : P_\calC \pi(x)P_\calC \in \bbT P_C \}, \\
        L_{(G,\pi)}(\calC) & := \{ x \in G : P_\calC \pi(x) = \pi(x) P_\calC\},
    \end{align*}
    where $P_\calC$ denotes the orthogonal projection onto the subspace $\calC$. The subgroup $L_{(G,\pi)}(\calC)$ is \emph{the group of dressed logical operators of $\calC$}, and the subgroup $S_{(G,\pi)}(\calC)$ is \emph{the group of stabilizers of $\calC$}. It is clear that $S_{(G,\pi)}(\calC)$ is a subgroup of $L_{(G,\pi)}(\calC)$.
\end{definition}

Now we define (non-hybrid, subspace) Clifford codes in the projective representation theory setting. 

\begin{definition}\label{def:Stabilizer code and Clifford code}
    Let $\calH$ be a finite dimensional Hilbert space, $(G,\pi)$ be a projective error model on $\calH$, and $\calC \subseteq \calH$ be a subspace. We say that $\calC$ is a \emph{Clifford subspace code} if there exists a normal subgroup $L$ of $G$ and a projective representation $\gamma \colon L \to \calU(\calC)$ such that $\Ind_L^G\gamma \simeq \pi$. We will use the notation $\calC = \calC (G,\pi, L, \gamma)$ to refer to the Clifford subspace code $\calC$ along with its defining data.

    We say that $\calC$ is a \emph{stabilizer code} if there exists a normal subgroup $S$ of $G$ and a function $\lambda \colon S \to \bbT$ such that
    \begin{equation*}
        \calC = \{ \ket{\psi} \in \calH : \pi(s)\ket{\psi} = \lambda(s)\ket{\psi} \text{ for all } s \in S \}.
    \end{equation*}
    We will sometimes use the notation $\calC = \calC(G,\pi,S,\lambda)$ to refer to the stabilizer code $\calC$ along with its defining data.
\end{definition}

\begin{remark}\label{rem:normality of L}
    The definition of a Clifford subspace code as given above differs slightly from that given in \cite[Definition 5.8]{Eidesen2025}, where it was not required that the subgroup $L$ be normal. We make this assumption here as it simplifies some of the statements and proofs. We furthermore felt justified in doing so as we have seen that $L$ is normal in all examples we have studied so far.

    In the case that $\calC = \calC(G,\pi,S,\lambda)$ is a stabilizer code, then we have $S \subset S_{(G,\pi)}(\calC)$. In this sense, stabilizer codes can be thought of as codes where the group $S_{(G,\pi)}(\calC)$ is prescribed (in the special case where $|G| = (\dim\calH)^2$ we necessarily have $S = S_{(G,\pi)}(\calC)$). The group $L_{(G,\pi)}(\calC)$ are all the elements in $G$ that leave the code space $\calC$ invariant, and it is the group that corresponds to the normalizer $\calN(S)$ when the Pauli error model is considered. 
\end{remark}

\begin{remark}\label{rem:clifford code of abstract rep}
    If $\calC = \calC(G,\pi,L,\gamma)$ is a Clifford subspace code, then we have by Frobenius reciprocity that
    \begin{equation}\label{eqn:clifford code representation}
        \Res^G_L\pi|_\calC = \gamma.
    \end{equation}
    With this we get an equivalent definition of a Clifford subspace code: If $(G,\pi)$ is a projective error model on a finite dimensional Hilbert space $\calH$, then a subspace $\calC \subset \calH$ is a Clifford subspace code if there exists a normal subgroup $L$ of $G$, and a projective representation $\gamma$ of $L$ (not necessarily on $\calC$) such that;
    \begin{equation*}
        \Ind_L^G\gamma \simeq \pi \quad \text{and} \quad \Res^G_L\pi|_\calC \simeq \gamma.
    \end{equation*}
    That this is an equivalent definition is \cite[Proposition 5.10]{Eidesen2025}.
\end{remark}

The following result tells us how Clifford codes and stabilizer codes are related, as well as how $S$ is related to $S_{(G,\pi)}(\calC)$ and how $L$ is related to $L_{(G,\pi)}(\calC)$. See \cite[Proposition 5.12, Theorem 6.3, Proposition 6.7]{Eidesen2025} for the relevant proofs.

\begin{theorem}\label{thm:clifford and stabilizer code facts}
    Let $\calH$ be a finite dimensional Hilbert space and $(G,\pi)$ be a projective error model on $\calH$. The following are true for a subspace code $\calC \subseteq \calH$:
    \begin{enumerate}
        \item If $\calC = \calC(G,\pi,S,\lambda)$ is a stabilizer code it is also a Clifford subspace code in the sense of \cite{Eidesen2025}, i.e. the subgroup $L$ may not be normal.
        \item If $\calC = \calC(G,\pi,L,\gamma)$ is a Clifford subspace code, then
        \[
            L_{(G,\pi)}(\calC)  = L \quad \mathrm{and} \quad 
            S_{(G,\pi)}(\calC)  = \ker(q \circ \gamma);
        \]
        in particular $S_{(G,\pi)}(\calC)$ is also a normal subgroup of $G$.
        \item If $\calC = \calC(G,\pi,S,\lambda)$ is a stabilizer code then $S$ is a subgroup of $S_{(G,\pi)}(\calC)$, and we may extend $\lambda$ to a function $\Tilde{\lambda} \colon S_{(G,\pi)}(\calC) \to \bbT$ such that
        \[
            \calC = \{ \ket{\psi} \in \calH : \pi(s)\ket{\psi} = \Tilde{\lambda}(s)\ket{\psi} \text{ for all } s \in S_{(G,\pi)}(\calC) \}.
        \]
        Note that $S_{(G,\pi)}(\calC)$ might not in general be a normal subgroup of $G$.
    \end{enumerate}
\end{theorem}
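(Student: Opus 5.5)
We prove the three items in turn, using Frobenius reciprocity, Schur's Lemma and the character inner product formula \eqref{eqn:inner porduct of characters}.

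\emph{Item 1.} Let $\calC=\calC(G,\pi,S,\lambda)$ and set $L:=L_{(G,\pi)}(\calC)$. Since $S$ is normal, each $\pi(x)$ maps the joint eigenspace of $\pi(S)$ with eigenvalue function $\lambda$ onto the joint eigenspace with the twisted function
\[
\lambda^x(s)=\sigma\text{-factor}\cdot\lambda(x^{-1}sx),
\]
computed from \eqref{projectiverelation} and \eqref{inverse}. Hence $\pi(x)\calC=\calC$ exactly when $\lambda^x=\lambda$. Two consequences follow:
\begin{itemize}
\item $L$ is the stabilizer of $\lambda$ under this twisted $G$-action;
\item $\calH$ splits as the orthogonal sum of the eigenspaces $\pi(r)\calC$, with $r$ running over a transversal of $G/L$.
\end{itemize}
Indeed, by irreducibility of $\pi$, the sum $\sum_x \pi(x)\calC$ is all of $\calH$, and distinct eigenvalue functions give orthogonal eigenspaces. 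Put $\gamma:=\Res^G_L\pi|_\calC$. The map $r\otimes\ket\psi\mapsto\pi(r)\ket\psi$ is then a bijective intertwiner $\Ind_L^G\gamma\to\pi$, by the basis description of $\Ind_L^G(\calC)$. This is exactly the original (not necessarily normal) definition from \cite{Eidesen2025}.

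\emph{Item 2.} The inclusion $L\subseteq L_{(G,\pi)}(\calC)$ holds because $\Res^G_L\pi|_\calC=\gamma$ by \eqref{eqn:clifford code representation}, so $\calC$ is $\pi(L)$-invariant.

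For the reverse inclusion we use the induced model of $\pi$. Under $\Ind_L^G\gamma\simeq\pi$, the space $\calH$ decomposes as $\bigoplus_{r\in G/L}\pi(r)\calC$. Comparing dimensions, $\dim\calH=[G:L]\dim\calC$, so these summands are orthogonal and pairwise distinct. An element $x\notin L$ moves the coset $L$ to $xL\neq L$, so $\pi(x)\calC=\pi(x)P_\calC\calH$ lies in a different summand and is orthogonal to $\calC$. Thus $\pi(x)$ does not commute with $P_\calC$, and $L_{(G,\pi)}(\calC)\subseteq L$.

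The remaining ingredient is that the summands really are orthogonal rather than merely independent. Here I plan to use normality of $L$. The summand $\pi(r)\calC$ carries the conjugate representation $\gamma^r$ of $L$. Orthogonality holds if $\gamma^r\not\simeq\gamma$ for $r\notin L$, together with isotypic orthogonality, since intertwiners between unitary representations produce orthogonal isotypic components. To see $\gamma^r\not\simeq\gamma$, apply Frobenius reciprocity together with \eqref{eqn:inner porduct of characters}:
\[
1=\dim\Hom_G(\pi,\pi)=\dim\Hom_L(\gamma,\Res^G_L\Ind_L^G\gamma)=\sum_{r}\dim\Hom_L(\gamma,\gamma^r).
\]
The first equality is Schur's Lemma, since $\pi$ is irreducible. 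The last uses Mackey's formula for normal $L$. So $\gamma$ is irreducible and $\gamma^r\not\simeq\gamma$ for all $r\notin L$.

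For the stabilizer group, take $x\in L$. The element $P_\calC\pi(x)P_\calC=\gamma(x)$ is a scalar multiple of $P_\calC$ iff $q(\gamma(x))$ is trivial. For $x\notin L$ the compression is $0$, which is not in $\bbT P_\calC$. Hence $S_{(G,\pi)}(\calC)=\ker(q\circ\gamma)$. Normality in $G$ follows because conjugation by $\pi(r)$ transports $\gamma$ to $\gamma^r$, and $\ker(q\circ\gamma)$ is normal in $L$. It then suffices to show that $\ker(q\circ\gamma)$ coincides with $\ker(q\circ\pi)\cap L$-type data. I expect this step to be the main obstacle. The plan is to show that $s\in\ker(q\circ\gamma)$ forces $\pi(s)$ to act by scalars on every summand $\pi(r)\calC$, so that $rsr^{-1}$ again acts as a scalar on $\calC$.

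\emph{Item 3.} By item 1, $\calC$ is a (possibly non-normal) Clifford code with $\gamma$ irreducible. Every $s\in S_{(G,\pi)}(\calC)$ acts on $\calC$ as $\tilde\lambda(s)\,\id_\calC$ for a unique scalar, and this defines $\tilde\lambda$ extending $\lambda$. The resulting joint eigenspace $\calC'$ of $S_{(G,\pi)}(\calC)$ satisfies $\calC\subseteq\calC'$, while $\calC'\subseteq\calC$ because $S\subseteq S_{(G,\pi)}(\calC)$.
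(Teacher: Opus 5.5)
The gap is the last clause of item 2: that $S_{(G,\pi)}(\calC)$ is normal in $G$. You leave it open as the ``main obstacle,'' and your plan for it cannot work. The rest of your proposal is sound: item 1, item 3, and the identities $L_{(G,\pi)}(\calC)=L$ and $S_{(G,\pi)}(\calC)=\ker(q\circ\gamma)$ in item 2. It is also more self-contained than the paper, which only cites \cite{Eidesen2025}.

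On the two versions of your plan:
\begin{itemize}
\item Taken literally, identifying $\ker(q\circ\gamma)$ with $\ker(q\circ\pi)\cap L$ is impossible. Since $\pi$ is projectively faithful, $\ker(q\circ\pi)=\{e\}$, so every Clifford code would have trivial stabilizer group. That already fails for the canonical Pauli codes.
\item Showing that each $s\in\ker(q\circ\gamma)$ acts as a scalar on every summand $\pi(r)\calC$ is not a route to normality but a restatement of it. The summand $\pi(r)\calC$ carries $\gamma^r$, whose projective kernel is $r\ker(q\circ\gamma)r^{-1}$.
\end{itemize}
The facts you do have, $\ker(q\circ\gamma)\trianglelefteq L$ and $L\trianglelefteq G$, give nothing, because normality is not transitive.

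In fact the clause is false in general, so no argument will close the gap. Here is a counterexample.
\begin{itemize}
\item Let $G=S_3\wr C_2=(S_3\times S_3)\rtimes\langle\tau\rangle$ with $\tau(a,b)\tau=(b,a)$, and let $\rho$ be the two-dimensional irreducible representation of $S_3$.
\item On $\calH=\bbC^2\oplus\bbC^2$ set $\pi(a,b)=\rho(a)\oplus\rho(b)$, and let $\pi(\tau)$ swap the two summands. This is a linear representation (trivial cocycle).
\item $\pi$ is irreducible, since $\langle\chi_\pi,\chi_\pi\rangle=1$.
\item $\pi$ is projectively faithful. Elements outside $L=S_3\times S_3$ swap the summands, so they are not scalar. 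And $\pi(a,b)$ is scalar only if $\rho(a)$ and $\rho(b)$ are the same scalar, which forces $a=b=e$ because $\rho$ is faithful and $S_3$ has trivial center.
\item Take $\calC=\bbC^2\oplus 0$ and $\gamma(a,b)=\rho(a)$. Then $L\trianglelefteq G$ and $\Ind_L^G\gamma\simeq\pi$, so $\calC$ is a Clifford subspace code in the paper's sense.
\item Yet $S_{(G,\pi)}(\calC)=\ker(q\circ\gamma)=\{e\}\times S_3$, while $\tau(\{e\}\times S_3)\tau=S_3\times\{e\}$. So $S_{(G,\pi)}(\calC)$ is not normal in $G$.
\end{itemize}
What you can prove is normality in $L$. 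Normality in $G$ needs an extra hypothesis, such as $G$ abelian (as in the Pauli model), or more generally all conjugates $\gamma^r$ having the same projective kernel as $\gamma$.

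Two smaller points. First, passing from an abstract isomorphism $\Ind_L^G\gamma\simeq\pi$ to $\calH=\bigoplus_r\pi(r)\calC$ requires that the isomorphism carry $e\otimes\calC$ onto $\calC$. That follows from the multiplicity-one fact, which you only establish afterwards. Second, the sum being direct, not orthogonal, already gives $L_{(G,\pi)}(\calC)\subseteq L$.
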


A Clifford subspace code can naturally be considered as a summand in a hybrid code. We make this precise in the following result, which shows how the full Hilbert space decomposes as a direct sum determined by the code space and a coset transversal. We shall give a more explicit description of the associated projections in the context of the error correction theorem of the next section. 

\begin{proposition}\label{prop:Hybrid structure of Clifford codes}
    Let $\calH$ be a finite dimensional Hilbert space and $(G,\pi)$ be a projective error model on $\calH$. Suppose that $\calC = \calC(G,\pi,L,\gamma)$ is a Clifford subspace code. and let $T$ denote a coset transversal for the factor group $G/L$. Then $\calH$ decomposes into a direct sum of mutually orthogonal subspaces as follows:
    \begin{equation*}
        \calH = \bigoplus_{t \in T} \pi(t)\calC.
    \end{equation*}
\end{proposition}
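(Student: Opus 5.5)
The plan is to use the induced-representation structure directly. Since $\calC = \calC(G,\pi,L,\gamma)$ is a Clifford subspace code, there is an isomorphism $\Phi \colon \Ind_L^G(\calC) \to \calH$ intertwining $\Ind_L^G\gamma$ and $\pi$. As recalled in the preliminaries, for a basis $\{\ket{i}\}$ of $\calC$ the vectors $t \otimes \ket{i}$ with $t \in T$ form a basis of $\Ind_L^G(\calC) = \bbC[G]^\sigma \otimes_L \calC$. Hence, as a vector space,
\[
\Ind_L^G(\calC) = \bigoplus_{t\in T} t\otimes \calC = \bigoplus_{t \in T} (\Ind_L^G\gamma)(t)(e \otimes \calC),
\]
using $t *_\sigma e = t$. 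First I will identify $\Phi(e\otimes\calC)$ with $\calC$ itself. Then, applying $\Phi$ and the intertwining relation, I will get $\calH = \sum_{t\in T}\pi(t)\Phi(e\otimes\calC)$ as a direct (though not yet orthogonal) sum. A dimension count, $\dim\calH = |G/L|\dim\calC$, confirms the sum is direct.

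The identification of $\Phi(e\otimes\calC)$ with $\calC$ is a point to be careful about. An abstract isomorphism need not map $e\otimes\calC$ onto the given subspace $\calC\subseteq\calH$, so I will avoid it and work with $\calC$ directly. By Remark~\ref{rem:clifford code of abstract rep}, $\calC$ is $\Res^G_L\pi$-invariant and $\Res^G_L\pi|_\calC = \gamma$. Frobenius reciprocity then produces a $G$-intertwiner $\Ind_L^G\gamma \to \pi$ sending $t\otimes\ket{\psi}\mapsto \pi(t)\ket{\psi}$. Its image is a nonzero $\pi$-invariant subspace, so by irreducibility of $\pi$ it is all of $\calH$. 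Comparing dimensions shows it is bijective. This gives $\calH = \bigoplus_{t\in T}\pi(t)\calC$ as a direct sum with $\pi(t)\calC$ in the role of the summands.

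The main obstacle is orthogonality, and the key step is the following. For $t\neq t'$ in $T$, take $\ket{\psi},\ket{\varphi}\in\calC$. Then
\[
\langle \pi(t)\varphi \vert \pi(t')\psi\rangle = \langle \varphi\vert \pi(t)^*\pi(t')\psi\rangle,
\]
and $\pi(t)^*\pi(t')$ is a unimodular multiple of $\pi(t^{-1}t')$ by \eqref{projectiverelation} and \eqref{inverse}, with $x := t^{-1}t'\notin L$. So it suffices to show $P_\calC\pi(x)P_\calC = 0$ for $x\notin L$.

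To prove this, I will use normality of $L$. For $y\in L$, $\pi(y)\pi(x) = c\,\pi(x)\pi(x^{-1}yx)$ with $c\in\bbT$, so $\pi(x)\calC$ is $\Res^G_L\pi$-invariant. Moreover, $\Res^G_L\pi|_{\pi(x)\calC}$ is the conjugate representation $y\mapsto \gamma(x^{-1}yx)$ up to a phase function, that is, a twist of $\gamma$. The orthogonal projection of $\pi(x)\calC$ onto $\calC$, namely $P_\calC\pi(x)P_\calC$ viewed as a map $\calC\to\calC$, intertwines $\gamma$ with this twisted representation up to these phases.

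If I decompose $\Res^G_L\pi$ into $L$-isotypic components, then $\calC$ and $\pi(x)\calC$ are orthogonal provided they lie in distinct isotypic components. By Frobenius reciprocity, $\dim\Hom_L(\gamma,\Res^G_L\pi) = \dim\Hom_G(\Ind\gamma,\pi) = 1$. If $\gamma$ is irreducible, this makes $\calC$ the full $\gamma$-isotypic component. Since $\pi(x)\calC\cap\calC = 0$ by the directness already shown, $\pi(x)\calC$ cannot be inside that isotypic component. Being a translate of an irreducible, it is irreducible, hence contained in a different isotypic component, and therefore orthogonal to $\calC$.

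If $\gamma$ is reducible, I would refine the argument using the character identity \eqref{eqn:inner porduct of characters}. Computing $\langle\chi_\gamma,\chi_{\Res\pi}\rangle = 1$ forces $\gamma$ to be irreducible anyway, since $\Res\pi$ contains $\gamma$ with multiplicity $\ge 1$ in each of its irreducible constituents. So $\gamma$ is irreducible in all cases, and the case split disappears.

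Finally, distinct isotypic components of a unitary representation are mutually orthogonal, and this completes the proof.
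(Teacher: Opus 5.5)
Your proof is correct, and for the direct-sum part it follows the paper's argument: the decomposition $\Ind_L^G(\calC)=\bigoplus_{t\in T}t\otimes\calC$ is transported to $\calH$, with $t\otimes\calC$ landing on $\pi(t)\calC$.

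The paper's proof, however, leaves two things unjustified. It only asserts that the abstract isomorphism $\Ind_L^G\gamma\simeq\pi$ sends $t\otimes\calC$ onto $\pi(t)\calC$. It also never argues orthogonality; its direct sum is only a vector-space one. You close both gaps:
\begin{itemize}
\item Your explicit Frobenius-reciprocity intertwiner $t\otimes\ket{\psi}\mapsto\pi(t)\ket{\psi}$ is, by Schur's Lemma, the unique isomorphism up to scalar. That is exactly what justifies the paper's identification.
\item Your orthogonality argument (multiplicity one of $\gamma$ in $\Res^G_L\pi$, then orthogonality of distinct isotypic components) is the mechanism the paper only uses later, in the proof of Proposition~\ref{lemma:code projection}. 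There the paper also tacitly assumes the irreducibility of $\gamma$, which you actually derive.
\end{itemize}

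Two small tidy-ups. First, $x=t^{-1}t'$ need not lie in $T$, so obtain $\calC\cap\pi(x)\calC=0$ by applying $\pi(t)^{-1}$ to $\pi(t)\calC\cap\pi(t')\calC=0$. Second, your multiplicity phrasing for the irreducibility of $\gamma$ is somewhat garbled. State it instead as $1\le\dim\Hom_L(\gamma,\gamma)\le\dim\Hom_L(\gamma,\Res^G_L\pi)=1$, using composition with the inclusion $\calC\hookrightarrow\calH$.

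A shorter route to orthogonality, closer to the paper's sketch, is also available. Give $\Ind_L^G(\calC)$ the inner product in which the summands $t\otimes\calC$ are orthogonal copies of $\calC$. Then $\Ind_L^G\gamma$ is unitary, and $\Phi^*\Phi$ is a positive self-intertwiner of an irreducible representation, hence a positive scalar. So $\Phi$ is a multiple of a unitary and carries the orthogonality over directly.
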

\begin{proof}
    By the assumption that $\calC$ is a Clifford subspace code we have that $\pi \simeq \Ind_L^G\gamma$. Hence, if $T$ is a set of transversals for the cosets $G/L$ we have the following isomorphisms:
    \begin{equation*}
        \bigoplus_{t \in T} t \otimes \calC
        \simeq \Ind_L^G\calC
        \simeq \calH.
    \end{equation*}
    The image of $t \otimes \calC$ under these identifications is precisely $\pi(t) \calC$ as a subspace of $\calH$. Thus,
    \begin{equation*}
        \calH = \bigoplus_{t \in T} \pi(t)\calC. \qedhere
    \end{equation*}
\end{proof}

Thus, given a subset of a coset transversal, we can define hybrid versions of our Clifford subspace codes, as was done in \cite{DauphinaiKribsVasmer2024}. Note that what we call `classical logical operators' in the following are, in other settings, sometimes called `translation operators'. Also observe that without loss of generality we can always assume the identity operator is the representative for the trivial coset. 

\begin{definition}\label{def:hybrid Clifford code}
    Let $\calH$ be a finite-dimensional Hilbert space and $(G,\pi)$ be a projective error model on $\calH$. Suppose that $\calC = \calC(G,\pi,L,\gamma)$ is a Clifford code, and $T_0 \subseteq T$ is a subset of a coset transversal for the factor group $G/L$. The subspace
    \[
        \bigoplus_{t\in T_0} \pi(t) \calC
    \]
    is called a \emph{hybrid Clifford subspace code}. We will refer to the subspace $\calC$ as the \emph{base code}, and we will use the notation $\calC = \calC(G,\pi,L,\gamma,T_0)$ to refer to the base code along with its Clifford subspace structure and the hybrid structure coming from the subset $T_0$, which we will refer to as the {\it classical logical operators} for the code.
\end{definition}

As a last step, we will discuss when the base code of a Clifford subspace code $\calC$ naturally decomposes as a tensor product $A \otimes B$, for the purposes of considering subsystem codes in this setting. Note that a subsystem decomposition of the base code of a hybrid code is carried to the other coset subspaces by the unitary actions of $\pi(t)$, for the classical logical operators $t\in T_0$, restricted to $\mathcal C$. Such a decomposition can be guaranteed in the following scenario.

\begin{lemma}\label{lemma:subsystem structure of Clifford code}
    Let $\calH$ be a finite-dimensional Hilbert space and $(G,\pi)$ be a projective error model on $\calH$. Suppose that $\calC = \calC(G,\pi,L,\gamma)$ is a Clifford subspace code. If there exists normal subgroups $\calL,\calG \subset L$ such that $L \simeq \calL \times \calG$, and projective representations $\alpha \colon \calL \to \calU(A)$, $\beta \colon \calG \to \calU(B)$ such that $\gamma \simeq \alpha \otimes \beta$, we have that
    \begin{equation*}
        \calC \simeq A \otimes B.
    \end{equation*}
    Furthermore, if $\alpha'$ is a projective representation of $\calL$ and $\beta'$ is a projective representation of $\calG$ such that $\gamma \simeq \alpha' \otimes \beta'$, then $\alpha \simeq \alpha'$ and $\beta \simeq \beta'$.
\end{lemma}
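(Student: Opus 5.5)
The first claim is essentially a dimension count plus the meaning of $\gamma \simeq \alpha\otimes\beta$. By Definition~\ref{def:Stabilizer code and Clifford code}, $\gamma$ is a projective representation of $L$ on $\calC$ itself. I will use the isomorphism $L \simeq \calL\times\calG$ to regard $\alpha\otimes\beta$, defined by $(\alpha\otimes\beta)(l,g)=\alpha(l)\otimes\beta(g)$, as a projective representation of $L$ on $A\otimes B$. Its cocycle is $\sigma_\alpha\sigma_\beta$, and this must agree with $\Res^G_L\sigma_\pi$ for the isomorphism to make sense. An isomorphism $\gamma\simeq\alpha\otimes\beta$ is then, by definition, a bijective intertwiner $T\colon \calC\to A\otimes B$. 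This linear bijection is the required identification $\calC\simeq A\otimes B$. If a unitary identification is wanted, I will pass to the polar decomposition $T=U|T|$. Since $\gamma$ and $\alpha\otimes\beta$ are unitary-valued, $T^*T$ commutes with $\gamma(L)$, so $|T|$ does too, and hence $U$ is again an intertwiner. This is the standard argument, and it goes through verbatim for projective representations with the same cocycle. Under $U$, the action of $\calL$ on $\calC$ becomes $\alpha(l)\otimes I_B$ up to the phase bookkeeping, and the action of $\calG$ becomes $I_A\otimes\beta(g)$, giving the subsystem structure.

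For uniqueness, I plan to first show that $\alpha$ and $\beta$ are irreducible. The key input is that $\gamma$ is irreducible, because $\Ind_L^G\gamma\simeq\pi$ and $\pi$ is irreducible. Indeed, if $\gamma$ had a proper subrepresentation $\gamma_0$, then $\Ind_L^G\gamma_0$ would be a proper subrepresentation of $\pi$, since induction is additive and multiplies dimensions by $[G:L]$. Now suppose $\alpha$ had a proper invariant subspace $A_0$. Then $A_0\otimes B$ would be a proper $(\alpha\otimes\beta)$-invariant subspace, which is impossible. The same argument applies to $\beta$.

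Next I restrict to $\calL$, identified with $\calL\times\{e\}$. Using the normalization $\sigma(x,e)=1$, the representation $\beta(e)$ is the identity, because $\beta(e)$ is a projection and is invertible. Hence $\Res_{\calL}(\alpha\otimes\beta)\simeq \alpha\otimes I_B\simeq \alpha^{\oplus \dim B}$. Likewise $\Res_{\calL}(\alpha'\otimes\beta')\simeq\alpha'^{\oplus\dim B'}$. So $\alpha^{\oplus m}\simeq\alpha'^{\oplus m'}$ with $\alpha$ and $\alpha'$ irreducible. By Maschke's theorem and Schur's lemma, or more directly via the character inner product \eqref{eqn:inner porduct of characters}, I get
\[
0 \neq m\,m'\,\langle\chi_\alpha,\chi_{\alpha'}\rangle ,
\]
so $\Hom_{\calL}(\alpha,\alpha')\neq 0$, and Schur's lemma then gives $\alpha\simeq\alpha'$. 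The symmetric argument on $\calG$ gives $\beta\simeq\beta'$.

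The main obstacle is the cocycle bookkeeping. The tensor decomposition of $\gamma$ only fixes the pair of cocycles up to the freedom $\sigma_\alpha\cdot c$, $\sigma_\beta\cdot c^{-1}$ for a scalar function $c$ on the two factors. This freedom is what allows a phase to be shifted between $\alpha$ and $\beta$. I therefore have to make sure that the restrictions I compare really are projective representations with the same cocycle on $\calL$, and that cross terms $\sigma_\gamma((l,e),(e,g))$ do not spoil the identification of $\Res_{\calL}$ with $\alpha\otimes I$. I will handle this by normalizing all cocycles so that $\sigma(x,e)=\sigma(e,x)=1$. With that normalization, restricting to $\calL\times\{e\}$ reads off exactly $\sigma_\alpha$ and $\sigma_{\alpha'}$, and these must coincide since both equal $\Res\sigma_\gamma$. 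If they only agreed up to a coboundary, the isomorphism would be interpreted after the corresponding rescaling of $\alpha'$.
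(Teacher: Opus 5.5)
Your proof is correct. The first part matches the paper: you read $\gamma\simeq\alpha\otimes\beta$ as a bijective intertwiner $\calC\to A\otimes B$. The cocycle bookkeeping also matches: you normalize $\sigma(x,e)=\sigma(e,x)=1$ and restrict to $\calL\times\{e\}$ to get $\sigma_\alpha=\sigma_{\alpha'}=\Res^L_{\calL}\sigma_\gamma$. Your polar-decomposition step is a small addition that gives a unitary identification.

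The routes differ in the uniqueness step. The paper appeals to a projective adaptation of Serre's Theorem~10, i.e.\ the factorization $\chi_{\alpha\otimes\beta}(l,g)=\chi_\alpha(l)\chi_\beta(g)$ together with orthonormality of such product characters on $\calL\times\calG$. You instead restrict to one factor, obtain $\alpha^{\oplus\dim B}\simeq\alpha'^{\oplus\dim B'}$, and finish with Schur's lemma.

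Your argument is more elementary. It also makes explicit that $\alpha,\beta,\alpha',\beta'$ are irreducible because $\gamma$ is, since $\Ind_L^G\gamma\simeq\pi$ is irreducible. The Serre-type argument needs this as well, but the paper leaves it implicit. Without it uniqueness fails: compare $\alpha=\rho\oplus\rho$, $\beta=\tau$ with $\alpha'=\rho$, $\beta'=\tau\oplus\tau$. In return, the paper's route gives more: for a product cocycle, the irreducible projective representations of $\calL\times\calG$ are exactly the tensor products of irreducibles of the factors.

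A minor point on your last paragraph. The equality $\sigma_\alpha(l_1,l_2)\sigma_\beta(g_1,g_2)=\sigma_{\alpha'}(l_1,l_2)\sigma_{\beta'}(g_1,g_2)$ for all arguments forces $\sigma_\alpha/\sigma_{\alpha'}$ to be a global constant, not a function. The normalization sets that constant to $1$, so no coboundary rescaling is ever needed. The normalization itself is essential, since $(-\alpha)\otimes(-\beta)=\alpha\otimes\beta$ while $-\alpha\not\simeq\alpha$.
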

\begin{proof}
    The first statement that $\calC \simeq A \otimes B$ follows by unwrapping what it means for two projective representations to be isomorphic.

    Now suppose that $\alpha \otimes \beta \simeq \gamma \simeq \alpha' \otimes \beta'$ and let $T \colon \alpha \otimes \beta \to \gamma$ be an isomorphism. Then, for any $l_1, l_2 \in \calL$ we have that
    \[
        (\alpha(l_1) \otimes \beta(e))(\alpha(l_2) \otimes \beta(e)) T
        = T \gamma(l_1) \gamma(l_2)
        = \sigma_\gamma(l_1,l_2) T \gamma(l_1l_2)
        = \sigma_\gamma(l_1,l_2) (\alpha(l_1l_2) \otimes \beta(e)) T.
    \]
    Hence, $\sigma_\alpha = \Res^L_\calL\sigma_\gamma$. By symmetry we hence get that
    \[
        \sigma_\alpha = \sigma_{\alpha'} \quad \text{and} \quad \sigma_\beta = \sigma_{\beta'}.
    \]
    Furthermore, a similar computation shows that for $l_1, l_2 \in \calL$ and $g_1, g_2 \in \calG$, we have that
    \[
        \sigma_\gamma((l_1,g_1),(l_2,g_2)) = \sigma_\alpha(l_1,l_2)\sigma_\beta(g_1,g_2).
    \]
    Adapting \cite[Theorem 10]{Serre77} to the projective setting then shows that $\alpha \simeq \alpha'$ and $\beta \simeq \beta'$.
\end{proof}

\begin{remark} 
    As in other subsystem code contexts, we will refer to $\calL$ as \emph{the bare logical group} and $\calG$ as \emph{the gauge group} for the code. (Recall we refer to $L$ as the dressed logical operators.) In the case where the assumptions in Lemma \ref{lemma:subsystem structure of Clifford code} are satisfied, we say that $\calC$ is a \emph{Clifford subsystem code}, and it is defined by the data $(G,\pi,\calL,\alpha,\calG,\beta)$. Note that the projective representations $\alpha$ and $\beta$ are uniquely determined by the fact that $\calC$ is a Clifford code, this follows by \cite[Corollary 5.4]{Eidesen2025} and Lemma \ref{lemma:subsystem structure of Clifford code} above. Hence, the subsystem structure of $\calC$ is completely determined by the quadruple $(G,\pi,\calL,\calG)$, and we will often only use this quadruple to refer to the associated Clifford subsystem code.
\end{remark} 

Above we gave a top-down definition of a hybrid Clifford subsystem code, where we first start with a Clifford subspace code, and then ask for a splitting of the group of logical operators that satisfy the assumptions of Lemma \ref{lemma:subsystem structure of Clifford code}. We could instead start with a projective error model $(G,\pi)$ on $\calH$ and ask for the existence of two subgroups $\calL \subset G$ and $\calG \subset G$ giving rise to a Clifford subsystem code determined by $(G,\pi,\calL,\calG)$. However, to be able to identify $\calL \times \calG$ as a subgroup of $G$ we need the following result, which follows from standard group theory.

\begin{lemma}\label{lemma:criteria for product subgroup}
    Let $G$ be a finite group and $\calL,\calG \subset G$ be subgroups of $G$. If the elements in $\calL$ commute with all the elements in $\calG$, i.e. $\calL$ and $\calG$ commute, we have that the set
    \begin{equation*}
        \calL \calG = \{ lg : l \in \calL, \text{ and } g \in \calG \}
    \end{equation*}
    is a subgroup of $G$. If we furthermore have that $\calL \cap \calG = \{e\}$, we have that
    \begin{equation*}
        \calL\calG \simeq \calL \times \calG.
    \end{equation*}
\end{lemma}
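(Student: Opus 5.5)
The plan is to check the subgroup axioms for $\calL\calG$ directly and then build an explicit isomorphism from $\calL \times \calG$. This is standard group theory, so the proof will be short.

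\textbf{Step 1: $\calL\calG$ is a subgroup.} Clearly $e = ee \in \calL\calG$. For closure, take $l_1g_1, l_2g_2 \in \calL\calG$. Since $g_1$ commutes with $l_2$,
\[
    (l_1g_1)(l_2g_2) = l_1 l_2 g_1 g_2 \in \calL\calG .
\]
For inverses,
\[
    (lg)^{-1} = g^{-1} l^{-1} = l^{-1} g^{-1} \in \calL\calG ,
\]
again because $\calL$ and $\calG$ commute. (Since $G$ is finite, closure alone would already suffice, but the inverse check is immediate anyway.)

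\textbf{Step 2: the isomorphism.} Define $\phi \colon \calL \times \calG \to \calL\calG$ by $\phi(l,g) = lg$.
\begin{itemize}
    \item It is a homomorphism by the same computation as in Step 1: $\phi(l_1l_2, g_1g_2) = l_1 l_2 g_1 g_2 = (l_1 g_1)(l_2 g_2)$.
    \item It is surjective by the definition of $\calL\calG$.
    \item It is injective: if $lg = e$, then $l = g^{-1} \in \calL \cap \calG = \{e\}$, so $l = g = e$.
\end{itemize}
Hence $\phi$ is a group isomorphism, and $\calL\calG \simeq \calL \times \calG$.

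None of these steps should present an obstacle. The only point requiring care is to use commutativity explicitly when reordering factors, in the closure computation of Step 1 and in the homomorphism property of $\phi$.
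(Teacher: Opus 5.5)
Your proof is correct: the closure and inverse computations use commutativity exactly where needed, and $\phi(l,g)=lg$ is a surjective homomorphism with trivial kernel because $\calL\cap\calG=\{e\}$. The paper writes out no proof and only says the lemma follows from standard group theory; your argument is that standard verification.
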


We can now define subsystem codes in this setting. 

\begin{definition}\label{def:Clifford subsystem code}
    Let $\calH$ be a finite dimensional Hilbert space and $(G,\pi)$ be a projective error model on $\calH$. A subspace $\calC \subseteq \calH$ is a \emph{Clifford subsystem code} if there exists commuting subgroups $\calL,\calG \subset G$ that only intersect at the identity element $e$, and irreducible projective representations $\alpha$ of $\calL$ and $\beta$ of $\calG$ such that $\Ind_{\calL \times \calG}^G(\alpha \otimes \beta) \simeq \pi$, and $\Res^G_{\calL \times \calG}\pi|_\calC \simeq \alpha \otimes \beta$. We will use the notation $A \otimes B = \calC(G,\pi,\calL,\calG)$ to refer to the Clifford subsystem code $\calC$ along with its identification with $A \otimes B$ and its defining data, and we refer to $\mathcal L$ as the bare logical group and $\mathcal G$ as the gauge group.
\end{definition}

\begin{remark}
    Since $\calC = A \otimes B = \calC(G,\pi,\calL,\calG)$ also defines a Clifford subspace code, we have that $S_{(G,\pi)}(\calC)$ is a subgroup of $L = \calL\calG$, since $\calL \cap \calG = \{e\}$ we have that $S_{(G,\pi)}(\calC) \subset \calL$ or $S_{(G,\pi)}(\calC) \subset \calG$. By convention we will always assume that the latter is true, i.e. that $S_{(G,\pi)}(\calC) \subset \calG$.
\end{remark}

We next give our final code definition, which includes all possible elements, along with the associated operator algebra it is identified with. 

\begin{definition}\label{def:hybrid Clifford subsystem code}
    Let $\calH$ be a finite dimensional Hilbert space and $(G,\pi)$ be a projective error model on $\calH$. A subspace $\calC = \calC(G,\pi,\calL,\calG,T_0)$ is a \emph{hybrid Clifford subsystem code} if $\calC(G,\pi,\calL,\calG)$ defines a Clifford subsystem code, and $T_0 \subseteq T$ is a subset of a coset transversal for the factor group $G/\calL\calG$.
    
    The code space associated to this quintuple is given by the direct sum
    \[
        \bigoplus_{t \in T_0} \pi(t)\calC,
    \]
    where $\calC \simeq A \otimes B$. 
    The triple $(\mathcal L, \mathcal G, T_0)$ are referred to, respectively, as the bare logical group, the gauge group, and the classical logical operators. 
    We further define the {\it Clifford code operator algebra} to be the algebra 
    \[
        \mathcal A (G,\pi,\calL,\calG,T_0) = \bigoplus_{t \in T_0} \pi(t) ( \mathcal L(A) \otimes I_{B}  ) \pi(t)^{-1} .
    \]
\end{definition} 

\subsection{Examples}

Here we revisit the projective error models given in Section~\ref{section:examples}, and we give explicit examples of Clifford codes in each case, both regular (non-hybrid) and hybrid versions, and stabilizer and non-stabilizer codes. 
For the sake of brevity, we will mainly focus on subspace codes (i.e., those without subsystem structures), but we note that each of the examples can be easily converted into subsystem code examples by using a standard construction from the theory of subsystem codes: converting a subset of logical qubits into gauge qubits. 

\begin{example}
    The motivating class of stabilizer codes, sometimes called the `canonical' stabilizer codes, arise in the Pauli error model through certain Abelian subgroups. The elements of $P_n$ either commute or anti-commute, and any Abelian subgroup $S$ that does include $-I$ with $s$ independent generators can be seen to have a $2^{k}$-dimensional stabilizer subspace where $k=n-s$. The canonical example of such a subgroup is $S = \langle Z_1, \ldots , Z_{s} \rangle$, where $Z_1 = Z \otimes I^{\otimes (n-1)}$, $Z_2 = I \otimes Z \otimes I^{\otimes (n-2)}$, etc; in other words, one gets $Z_j$ by setting $b_j = 1$, $a_j = 0$, and $a_i = 0 = b_i$ for all $i \neq j$, in the projective representation given in Eq.~\eqref{canonical}.

    This is a core example in the foundational stabilizer formalism of Gottesman for quantum error correction \cite{gottesman1996class,gottesman1997stabilizer}. Subsequently an extension of the formalism for subsystem codes was obtained by Poulin \cite{Poulin2005Stabilizer}, built on the `operator quantum error correction' framework \cite{Kribs2005Unified,Kribs2006oqec}. More recently, in \cite{DauphinaiKribsVasmer2024} both of these formalisms were extended to hybrid codes via the operator algebra approach discussed further below, and to the setting of entanglement-assisted codes in \cite{NadkarniAdonsouDauphinaisKribsVasmer2024}. Let us simply note here that a choice of coset transversal for the canonical codes is given by the $2^s$-element set: 
    \begin{align*} 
        \pi(T)
        & = \big\{ \pi ( a_1, 0, a_2, 0, \ldots , a_s, 0, 0, \ldots , 0) \,\, | \,\, a_1 , \ldots a_s \in \{ 0,1 \}   \big\} \\ 
        & = \big\{ X^{a_1} \otimes \ldots \otimes X^{a_s}\otimes I_2^{\otimes k} \,\, | \,\, a_1 , \ldots a_s \in \{ 0,1 \}   \big\}.
    \end{align*} 
    In stabilizer formalism theory, the elements of $T$ restricted to the code space map out all possible (orthogonal) syndrome spaces for the code. As shown in \cite{DauphinaiKribsVasmer2024,NadkarniAdonsouDauphinaisKribsVasmer2024} and discussed above, subsets of classical logical operators $T_0 \subseteq T$ generate hybrid codes. 
\end{example}

The following is built on the error model of Example~\ref{cdexample}. 

\begin{example}\label{nonstabilizer}
    Let $\{ \ket{00},\ket{01}, \ket{10}, \ket{11}\}$ be the standard basis of $\bbC^2 \otimes \bbC^2$. Then $\calC = \Span\{\ket{00}, \ket{01}\}$ is a Clifford subspace code for $G = C_2 \times D_{2d}$ and $\pi$ given in Eq.~(\ref{cdexampleeqn}), which furthermore is \emph{not} a stabilizer code. To see this, we compute the group of logical operators $L$, and the group of stabilizers $S$. Note that the element
    \begin{equation*}
        \pi ( c^k b^l a^m ) = X^kZ^m \otimes X^lP^m
    \end{equation*}
    maps $\calC$ to $\calC$ if and only if $k = 0$. From this we immediately have $L = D_{2d} = \langle a,b \rangle$. Furthermore, the only element of $L$ that acts as a scalar on $\calC$ is the identity element. Hence, $S = \{e\}$, the trivial subgroup of $G$. This already tells us that $\calC$ is not a stabilizer code, since a stabilizer code is necessarily the common eigenspace of its stabilizer group, cf. \cite[Proposition 6.7]{Eidesen2025}.

    To see that this is a Clifford subspace code, let $\gamma = \Res^G_L\pi|_\calC$, which we can view as a projective representation $\gamma \colon L \to \calU(\bbC^2)$, which is concretely given by 
    \begin{equation*}
        \gamma(b^la^m) = X^lP^m.
    \end{equation*}
    We recognize this as the XP-Projective error model. We again refer to the proof of \cite[Proposition 8.1]{Eidesen2025} to see that $\Ind_L^G\gamma \simeq \pi$, hence showing that $\calC$ is a Clifford subspace code.

    As this is a 2-dimensional code in 2-qubit space, there is limited opportunity for a hybrid encoding. In this case, $T = \{e,c\}$ is a coset transversal, and $\pi(T) = \{ I_2 \otimes I_2, X \otimes I_2 \}$. Note though, if we extend the projective error model by taking products, i.e. consider the projective error model $(G^n,\pi^{\otimes n})$ on $(\bbC^2 \otimes \bbC^2)^{\otimes n}$, this projective error model will allow for larger codes (hybrid and otherwise), with transversal operators given by copies of $X$ on each (2-qubit) subsystem. We will revisit this example in the next section. 
\end{example}

The following is built on the projective error model of Example~\ref{zlexample}. 

\begin{example}
    Let $\{\ket{ij} : i \in \{0,1\}, j \in \{0,1,...,n-1\}\}$ be the standard basis of $\bbC^2 \otimes \bbC^n$. Then
    \begin{equation*}
        \calC = \Span\{ \ket{0j} : j \in \{0,1,...,n-1\}\}
    \end{equation*}
    is a Clifford subspace code for $G = \mathbb Z_2 \times L$, with $L = (\bbZ_n \times \bbZ_n) \rtimes \bbZ_2$, and $\pi$ given in Eq.~(\ref{zlexampleeqn}), which is \emph{not} a stabilizer code. Similar to the previous example we have that this is best shown by computing the group of logical operators and the stabilizer group of $\calC$. Note that the element
    \[
        \pi(a,b,c,d) = X^a Z^d \otimes (X_n)^b (Z_n)^c C^d
    \]
    maps $\calC$ to $\calC$ if and only if $a=0$. The rest of the argument is very similar to the previous example, and one finds that the group of logical operators is precisely $L = (\bbZ_n \times \bbZ_n) \rtimes \bbZ_2$, and the group of stabilizers $S$ is again the trivial subgroup. Hence, $\calC$ is not a stabilizer code, which follows by \cite[Proposition 6.7]{Eidesen2025}. 

    Let $\gamma := \Res^G_L\pi|_\calC$, i.e. $\gamma \colon L \to \calU(\bbC^n)$ is the projective representation of $L$ given by
    \begin{equation*}
        \gamma(b,c,d) = (X_n)^b(Z_n)^cC^d.
    \end{equation*}
    One finds that $\Ind_L^G\gamma \simeq \pi$ (the proof of this is completely analogous to the proof of the corresponding fact in the previous example, the details for which can be found in \cite[Proposition 8.1]{Eidesen2025}). Thus, $\calC$ is a Clifford subspace code.

    The code here is an $n$-dimensional code in a $2n$-dimensional space, and a coset transversal is given by $T = \{(0,0,0,0),(1,0,0,0)\}$ where $\pi(T) = \{ I_{2} \otimes I_{n}, X\otimes I_n \}$. This gives limited opportunity for hybrid encodings, but of course considering tensor products of the representation expands these possibilities on the larger systems.  
\end{example}

Even though the framework of hybrid Clifford subsystem codes we have given captures many existing examples in the literature, as well as new examples, there are related examples of codes that are not captured by this framework. The following is built on the XP error model of Example~\ref{example:XP-projective error model} and gives an illustration of this. Note that even though the following does not define a Clifford subspace code, is defines a \emph{weak} stabilizer code as defined in \cite[Definition 5.2]{Eidesen2025}.

\begin{example}\label{example: negative XP result}
    Let $\{\ket{x} : x \in \{0,1\}^n\}$ denote the standard basis for $\calH := (\bbC^2)^{\otimes n}$, $G := (D_3)^n$, and $\pi \colon G \to \calU(\calH)$ be given by
    \[
        \pi(b^{k_1}a^{l_1},\ldots,b^{k_n}a^{l_n}) = X^{k_1}P^{l_1} \otimes \cdots \otimes X^{k_n}P^{l_n},
    \]
    as in Example~\ref{example:XP-projective error model} with $d = 3$. Consider the subgroup
    \[
        S := \{ (b^ka^l,b^ka^l,\ldots,b^ka^l) \in G : k \in \{0,1\}, \ l \in \{0,1,2\}\},
    \]
    and the function $\lambda \colon S \to \bbT$ defined by $\lambda(b^ka^l,b^ka^l,\ldots,b^ka^l) = \zeta_3^{-nl}$. Define
    \[
        \calC := \{ \ket{\psi} \in \calH : \pi(s)\ket{\psi} = \lambda(s)\ket{\psi} \text{ for all } s \in S \}
    \]
    Since $S$ is not a normal subgroup of $G$ we have that this defines the \emph{weak} stabilizer code associated to the quadruple $(G,\pi,S,\lambda)$, cf.~\cite[Definition 5.2]{Eidesen2025}, if $\calC \neq 0$. Note that $\calC$ can also be described as the common eigenvalue $1$ eigenspace of the operators $X^{\otimes n}$ and $\zeta_3^n P^{\otimes n}$.

    One computes that $\Res^G_S\sigma_\pi = \delta \lambda$, which by \cite[Theorem 5.5]{Eidesen2025} is a necessary condition for $\calC \neq 0$. We can then appeal to \cite[Proposition 5.7]{Eidesen2025} to compute that
    \[
        \dim\calC
        = \frac{1}{|S|} \sum_{s \in S} \overline{\lambda(s)}\chi_\pi(s)
        = \frac{1}{6} \sum_{l = 0}^{2} \zeta_3^{nl}(1 + \zeta_3^l)^n
        = \frac{2^n + 2(\zeta_3 + \zeta_3^2)^n}{6} 
        = \frac{2^n + 2(-1)^n}{6}.
    \]
    A basis for $\calC$ is given by $\{ \ket{x} + \ket{\overline{x}} : \wt(x) \equiv 2n (\!\!\!\mod 3) \}$ where $\wt(x) = \sum_{i = 1}^{n} x_i$, and $\overline{x} \in \{0,1\}^n$ is defined by setting $\overline{x}_i = x_i + 1$ and reducing modulo $2$. In other words, $\ket{\overline{x}} = X^{\otimes n} \ket{x}$. For example, if $n = 2$, this basis is given by the single vector $\ket{01} + \ket{10}$, and if $n = 3$ its given by the single vector $\ket{000} + \ket{111}$.
    
    With this basis for $\calC$ in hand, we compute that the group of logical operators $L_{(G,\pi)}(\calC)$ is equal to $S$. If $\calC$ was a Clifford subspace code we would have, by \cite[Corollary 6.5]{Eidesen2025}, that
    \[
        \dim\calH \cdot |L_{(G,\pi)}(\calC)| = \dim\calC \cdot |G|.
    \]
    Computing these numbers we get that
    \[
        2^n \cdot 6 \neq \frac{2^n + 2(-1)^n}{6} \cdot 6^n,
    \]
    hence $\calC$ is \emph{not} a Clifford subspace code.
    
    Asymptotically, we have that $\calC$ encodes $n-\log_26$ qubits of information. This is better than the $n-3$ qubits encoded by optimal stabilizer codes, but not quite as good as the CWS codes introduced in \cite{Smolin2007} on which this code is based. However, those codes do not fit into a stabilizer framework.
    
    Following \cite{nemec2021infinite}, we can construct non-Clifford subsystem and hybrid codes by extending these codes by one qubit.
    By this, we mean a one-qubit gauge subsystem is appended to the end of the code, potentially followed by an entangling unitary between the code and the gauge subsystem.
    Doing this, we obtain a non-trivial Clifford subsystem code with a one-qubit gauge subsystem, with gauge group
    \begin{equation*}
        \mathcal{G}=\left\langle X^{\otimes n}\otimes I_2,\zeta_3^{n}P^{\otimes n}\otimes I_2,I_2^{\otimes n}\otimes X, I_2^{\otimes n-1}\otimes Z^{\otimes 2}\right\rangle.
    \end{equation*}
    This gauge qubit can be gauge fixed, giving us a hybrid Clifford subspace code encoding one classical bit, with classical logical operators 
    \begin{equation*}
        T_0=\left\{I_2^{\otimes n+1},I_2^{\otimes n-1}\otimes Z^{\otimes 2}\right\}.
    \end{equation*}
\end{example} 

\section{Error Correction Theorem}\label{section: error correction theorem}

In this section, we first derive an explicit form for the code space projection and consequently the projections defined by the different classical logical operators. Then we prove the main error correction theorem for these codes. 

\subsection{The Code Projection and Classical Logical Operator Orthogonality}

The following results are well-known. We present short proofs of these facts for the sake of completeness. The proofs presented here are modeled after the proofs in the unitary setting presented in \cite{Serre77}. Recall that the character $\chi_\pi$ of a $\sigma$-projective representation $\pi$ of $G$ is defined by
\begin{equation*}
    \chi_\pi(x) = \Tr(\pi(x)) \,\, \mathrm{for}\,\mathrm{all}\, x\in G.
\end{equation*}

\begin{definition}
    We say that a funtion $f: G \rightarrow \mathbb C$ is a {\it $\sigma$-class function} if, for all $x,y\in G$,  
    \begin{equation*}
        f(yxy^{-1}) = \overline{\sigma(y,x)}\sigma(yxy^{-1},y)  \, f(x).
    \end{equation*}
\end{definition}

\begin{lemma} 
    The character $\chi_\pi$ is a $\sigma$-class function. 
\end{lemma}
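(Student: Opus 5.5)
The plan is a direct computation from the defining relation \eqref{projectiverelation}, combined with the cyclicity of the trace. Fix $x,y \in G$. The idea is to write $\pi(yxy^{-1})$ as a scalar multiple of the conjugate $\pi(y)\pi(x)\pi(y)^{-1}$, and to keep careful track of which cocycle values appear. Conjugate operators have equal trace, so only the scalar will remain.

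First I apply \eqref{projectiverelation} to the pair $(y,x)$:
\[
    \pi(y)\pi(x) = \sigma(y,x)\,\pi(yx).
\]
Next I apply it to the pair $(yxy^{-1},y)$. Since $(yxy^{-1})y = yx$, this gives
\[
    \pi(yxy^{-1})\pi(y) = \sigma(yxy^{-1},y)\,\pi(yx).
\]
Since $\pi$ takes values in $\calU(\calH)$, the operator $\pi(y)$ is invertible. I can therefore solve the second identity for $\pi(yxy^{-1})$ and substitute $\pi(yx) = \overline{\sigma(y,x)}\,\pi(y)\pi(x)$ from the first, using $|\sigma(y,x)|=1$. This yields
\[
    \pi(yxy^{-1}) = \sigma(yxy^{-1},y)\,\overline{\sigma(y,x)}\;\pi(y)\pi(x)\pi(y)^{-1}.
\]

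Finally I take traces. Cyclicity gives $\Tr(\pi(y)\pi(x)\pi(y)^{-1}) = \Tr(\pi(x)) = \chi_\pi(x)$, and hence
\[
    \chi_\pi(yxy^{-1}) = \overline{\sigma(y,x)}\,\sigma(yxy^{-1},y)\,\chi_\pi(x),
\]
which is exactly the $\sigma$-class function condition.

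There is no real obstacle here. The only point needing care is bookkeeping: one must choose the correct pair of elements in each use of \eqref{projectiverelation}, so that the resulting factors are precisely $\overline{\sigma(y,x)}$ and $\sigma(yxy^{-1},y)$ in the order prescribed by the definition, rather than an equivalent expression involving $\sigma(y,y^{-1})$ obtained via \eqref{inverse}.
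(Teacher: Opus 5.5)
Your proof is correct. It differs from the paper's only in how the conjugate is factored, but that choice changes the bookkeeping.

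The paper expands $\pi(y)\pi(x)\pi(y^{-1}) = \sigma(y,x)\,\sigma(yx,y^{-1})\,\pi(yxy^{-1})$. It then has to replace $\pi(y^{-1})$ by $\sigma(y,y^{-1})\pi(y)^{-1}$ via \eqref{inverse}, which rests on the normalization $\sigma(e,\cdot)=1$ giving $\pi(e)=I$. Finally it invokes the cocycle identity \eqref{cocycle} on the triple $(yx,y^{-1},y)$, together with $\sigma(y,y^{-1})=\sigma(y^{-1},y)$, to turn $\overline{\sigma(yx,y^{-1})}\,\sigma(y,y^{-1})$ into $\sigma(yxy^{-1},y)$.

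You instead write $yx=(yxy^{-1})y$, so you never evaluate $\pi$ at $y^{-1}$. The two factors $\overline{\sigma(y,x)}$ and $\sigma(yxy^{-1},y)$ required by the definition then appear directly. Your computation uses only \eqref{projectiverelation}, invertibility of $\pi(y)$, $|\sigma|=1$, and cyclicity of the trace, so it does not depend on the normalization of $\sigma$ at all.

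This is the same factorization the paper itself uses in the proof of Lemma~\ref{lemma:intertwiners from class functions}. Your route therefore makes the two arguments visibly consistent, and it avoids the extra cocycle manipulation.
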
 
\begin{proof} 
    Let $x,y\in G$. Then we have 
    \begin{eqnarray*}
        \chi_\pi(yxy^{-1}) &=& \Tr(\pi(yxy^{-1})) \\ 
        &=&  \overline{\sigma(yx,y^{-1})} \, \overline{\sigma(y,x)} \, \sigma(y,y^{-1}) \, \Tr(\pi(y)\pi(x) \pi(y^{-1})) \\
        &=&  \overline{\sigma(y,x)} \, \sigma(yxy^{-1},y) \,       \chi_\pi(x) , 
    \end{eqnarray*}
    where in this calculation we have used Eq.~(\ref{projectiverelation}), Eq.~(\ref{inverse}), the relation $\sigma(y,y^{-1}) = \sigma(y^{-1},y)$, and the cocycle equation 
    \[
    \sigma(yx,y^{-1}) \, \sigma(yxy^{-1},y) = \sigma(yx,e) \, \sigma(y^{-1},y) = \sigma(y, y^{-1}), 
    \]
    recalling that $ \sigma(yx,e) = 1$. 
\end{proof}

We further note that the character on $L$ given by $\chi_\gamma(x) = \Tr(\gamma(x))= \Tr(\pi(x)P_{\mathcal C})$ for $x\in L$ is also a $\sigma$-class function on $L$. This follows from a similar calculation, with the addition of $P_{\mathcal C}$ which commutes with all operators in $L$. 

\begin{lemma}\label{lemma:intertwiners from class functions}
    Let $G$ be a finite group and $\pi$ be a $\sigma$-projective representation of $G$ on $\calH$. If $f$ is a $\sigma$-class function, then the linear map $T_f \colon \calH \to \calH$ defined by
    \begin{equation*}
        T_f = \sum_{x \in G}\overline{f(x)}\pi(x),
    \end{equation*}
    is an intertwiner. Furthermore, if $\pi$ is irreducible,
    \begin{equation*}
        T_f = \frac{|G|}{\dim\calH}\langle \chi_\pi, f \rangle \id_\calH.
    \end{equation*}
\end{lemma}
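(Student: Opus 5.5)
We must show $\pi(y)T_f = T_f\pi(y)$ for every $y\in G$, i.e.\ $\pi(y)T_f\pi(y)^{-1} = T_f$. Then, for irreducible $\pi$, Schur's Lemma makes $T_f$ a scalar, and we identify the scalar by taking traces.

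\emph{Step 1: how conjugation acts on each $\pi(x)$.} Using Eq.~\eqref{projectiverelation} and Eq.~\eqref{inverse}, we compute $\pi(y)\pi(x)\pi(y)^{-1}$ as a phase times $\pi(yxy^{-1})$. First, $\pi(y)\pi(x) = \sigma(y,x)\pi(yx)$. Next, $\pi(yx) = \pi(yxy^{-1}\cdot y) = \overline{\sigma(yxy^{-1},y)}\,\pi(yxy^{-1})\pi(y)$. Together these give
\[
\pi(y)\pi(x)\pi(y)^{-1} = \sigma(y,x)\,\overline{\sigma(yxy^{-1},y)}\,\pi(yxy^{-1}).
\]
This is the same phase that appears in the $\sigma$-class function condition, so no cocycle identity beyond Eq.~\eqref{projectiverelation} is needed.

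\emph{Step 2: substitute into $T_f$.} Conjugating term by term,
\[
\pi(y)T_f\pi(y)^{-1} = \sum_{x\in G}\overline{f(x)}\,\sigma(y,x)\,\overline{\sigma(yxy^{-1},y)}\,\pi(yxy^{-1}).
\]
The $\sigma$-class condition reads $f(yxy^{-1}) = \overline{\sigma(y,x)}\sigma(yxy^{-1},y)f(x)$. Taking complex conjugates gives $\overline{f(yxy^{-1})} = \sigma(y,x)\overline{\sigma(yxy^{-1},y)}\,\overline{f(x)}$, so the coefficient of $\pi(yxy^{-1})$ is exactly $\overline{f(yxy^{-1})}$. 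Reindexing by $z = yxy^{-1}$, which is a bijection of $G$, returns $T_f$. Hence $T_f$ is an intertwiner from $\pi$ to $\pi$. The main obstacle is this step: keeping the phase conventions consistent (the paper's $\sigma$ is not conjugated, unlike \cite{CeccheriniSilbersteinTullio22}) so that the phase from Step 1 matches the $\sigma$-class definition exactly.

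\emph{Step 3: the irreducible case.} If $\pi$ is irreducible, Schur's Lemma gives $\dim\Hom_G(\pi,\pi)=1$. Since $\id_\calH$ is an intertwiner, $T_f = c\,\id_\calH$ for some scalar $c$. Taking traces,
\[
c\dim\calH = \Tr(T_f) = \sum_{x}\overline{f(x)}\chi_\pi(x) = |G|\langle\chi_\pi,f\rangle
\]
by Eq.~\eqref{eqn:inner product}. Therefore $c = \frac{|G|}{\dim\calH}\langle\chi_\pi,f\rangle$, as claimed.
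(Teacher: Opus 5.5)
Your proof is correct and follows essentially the same route as the paper: rewrite $\pi(y)\pi(x)$ as the phase $\sigma(y,x)\overline{\sigma(yxy^{-1},y)}$ times $\pi(yxy^{-1})\pi(y)$ using only Eq.~\eqref{projectiverelation}, absorb that phase with the $\sigma$-class condition, reindex over the conjugation bijection, and finish with Schur's Lemma and a trace computation. The only cosmetic difference is that you check $\pi(y)T_f\pi(y)^{-1}=T_f$ rather than $\pi(y)T_f=T_f\pi(y)$, and these are equivalent because $\pi(y)$ is unitary.
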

\begin{proof}
    Let $y \in G$. Then we have, 
    \begin{align*}
        \pi(y)T_f
        & = \sum_{x \in G} \overline{f(x)} \pi(y)\pi(x) \\
        & = \sum_{x \in G} \overline{f(x)}\sigma(y,x)\pi(yx) \\
        & = \sum_{x \in G} \overline{f(x)}\sigma(y,x)\pi(yxy^{-1}y) \\
        & = \sum_{x \in G} \overline{\sigma(y,x)}\sigma(yxy^{-1},y)\overline{f(yxy^{-1})}\sigma(y,x)\overline{\sigma(yxy^{-1},y)}\pi(yxy^{-1})\pi(y) \\
        & = \sum_{x \in G} \overline{f(yxy^{-1})}\pi(yxy^{-1})\pi(y) \\
        & = T_f\pi(y).
    \end{align*}
    Hence, $T_f$ is an intertwiner. If $\pi$ is irreducible we have by Schur's Lemma that $T_f = \lambda\id_\calH$ for some $\lambda \in \bbC$. Computing the trace of both of these sides (and using Eq.~(\ref{eqn:inner product})) yields the desired equality.
\end{proof}

We next derive the formula for the code space projection. 

\begin{proposition}\label{lemma:code projection}
    Let $\calH$ be a finite dimensional Hilbert space and $(G,\pi)$ be a projective error model on $\calH$. Suppose that $\calC = \calC(G,\pi,L,\gamma)$ is a Clifford subspace code. Then the orthogonal projection onto $\calC$ is given by
    \begin{equation*}
        P_\calC = \frac{\dim\calC}{|L|}\sum_{x \in L}\overline{\chi_\gamma(x)}\pi(x).
    \end{equation*}
\end{proposition}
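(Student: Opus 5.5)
The plan is to apply Lemma~\ref{lemma:intertwiners from class functions} to the restricted representation $\Res^G_L\pi$ and then identify the resulting intertwiner with $P_\calC$ using the orthogonal decomposition of Proposition~\ref{prop:Hybrid structure of Clifford codes}.

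First, decompose $\calH = \bigoplus_{t\in T}\pi(t)\calC$ as in Proposition~\ref{prop:Hybrid structure of Clifford codes}. Since $L$ is normal, each summand $\pi(t)\calC$ is $\Res^G_L\pi$-invariant. Indeed, for $x\in L$ we have $\pi(x)\pi(t) = c\,\pi(t)\pi(t^{-1}xt)$ for some unimodular scalar $c$, and $t^{-1}xt\in L$ preserves $\calC$. The restriction of $\Res^G_L\pi$ to $\pi(t)\calC$ is isomorphic, via $\pi(t)$ and up to rescaling each $\pi(x)$ by a phase, to the conjugate representation $\gamma^t(x) := \gamma(t^{-1}xt)$. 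This $\gamma^t$ is irreducible: $\pi$ is irreducible, so by \eqref{eqn:inner porduct of characters} and Frobenius reciprocity $\dim\Hom_L(\gamma,\gamma) = \dim\Hom_G(\Ind\gamma,\pi) = 1$.

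Second, consider $T := \sum_{x\in L}\overline{\chi_\gamma(x)}\pi(x)$. As remarked after the lemma on $\sigma$-class functions, $\chi_\gamma$ is a $\sigma$-class function on $L$. Applying Lemma~\ref{lemma:intertwiners from class functions} to $\Res^G_L\pi$ shows that $T$ commutes with all $\pi(x)$, $x\in L$, and hence preserves each summand $\pi(t)\calC$. On each summand, $T$ restricts to the analogous operator built from the irreducible subrepresentation, so Schur's lemma applies. On $\calC$ itself we get $T|_\calC = \frac{|L|}{\dim\calC}\langle\chi_\gamma,\chi_\gamma\rangle\id_\calC = \frac{|L|}{\dim\calC}\id_\calC$, using $\langle\chi_\gamma,\chi_\gamma\rangle=1$. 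On $\pi(t)\calC$ with $t\notin L$, the scalar is proportional to $\langle \chi_{\pi|_{\pi(t)\calC}},\chi_\gamma\rangle = \dim\Hom_L(\gamma^t,\gamma)$, with phases absorbed by the cocycle bookkeeping.

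The main obstacle is showing that $\gamma^t\not\simeq\gamma$ for $t\notin L$, so that $T$ vanishes off $\calC$. I would argue by Frobenius reciprocity and Mackey-type counting: $\Res^G_L\pi \simeq \bigoplus_{t\in T}\gamma^t$. Then
\[1=\dim\Hom_L(\gamma,\Res^G_L\pi)=\sum_{t\in T}\dim\Hom_L(\gamma,\gamma^t).\]
The $t=e$ term already contributes $1$, so every other term is $0$.

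I expect care to be needed in checking that the phase factors relating $\pi(x)\pi(t)$ and $\pi(t)\pi(t^{-1}xt)$ are genuinely absorbed. Concretely, one must verify that $\pi|_{\pi(t)\calC}$ is a $\Res\sigma$-projective representation isomorphic to a twisted conjugate of $\gamma$ with the same cocycle, so that \eqref{eqn:inner porduct of characters} applies. With this in hand, $T = \frac{|L|}{\dim\calC}P_\calC$, which rearranges to the stated formula for $P_\calC$.
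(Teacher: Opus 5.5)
Your proposal is correct and follows essentially the paper's argument: the class-function intertwiner lemma makes $\sum_{x\in L}\overline{\chi_\gamma(x)}\pi(x)$ act as the scalar $\frac{|L|}{\dim V}\langle\chi_\delta,\chi_\gamma\rangle$ on each irreducible $L$-invariant piece $V$ carrying $\delta$. Frobenius reciprocity together with Schur's Lemma then gives $\dim\Hom_L(\gamma,\Res^G_L\pi)=1$, so no piece orthogonal to $\calC$ can be isomorphic to $\gamma$. The only difference is that you split $\calC^\perp$ along the coset subspaces $\pi(t)\calC$, which uses normality of $L$, whereas the paper splits $\calC^\perp$ into arbitrary irreducible $L$-subrepresentations via Maschke's Theorem; the phase bookkeeping you flag is harmless either way, since the restriction of $\Res^G_L\pi$ to any $L$-invariant subspace is automatically $\Res^G_L\sigma$-projective, so the character inner-product formula applies directly.
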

\begin{proof}
    To simplify notation let
    \begin{equation*}
        T = \frac{\dim\calC}{|L|}\sum_{x \in L}\overline{\chi_\gamma(x)}\pi(x)
    \end{equation*}
    and write $\calH = \calC \oplus \calC^\perp$. To show that $T = P_\calC$ it is enough to show that $T\ket{\psi} = \ket{\psi}$ for any $\ket{\psi} \in \calC$ and that $T\ket{\varphi} = 0$ for any $\ket{\varphi} \in \calC^\perp$. To show that the first holds, let $\ket{\psi} \in \calC$. Then,
    \begin{align*}
        T \ket{\psi}
        & = \frac{\dim\calC}{|L|}\sum_{x \in L}\overline{\chi_\gamma(x)}\pi(x)\ket{\psi} & \\
        & = \frac{\dim\calC}{|L|}\sum_{x \in L}\overline{\chi_\gamma(x)}\gamma(x)\ket{\psi} & \text{since } \pi(x)|_\calC = \gamma(x), \text{ see \eqref{eqn:clifford code representation}}, \\
        & = \langle\chi_\gamma, \chi_\gamma\rangle\ket{\psi} & \text{by Lemma~\ref{lemma:intertwiners from class functions}}, \\
        & = \ket{\psi} & \text{by Schur's Lemma and \eqref{eqn:inner porduct of characters}}.
    \end{align*}

    Now let $\ket{\varphi} \in \calC^\perp$ and let $V$ denote the smallest $L$-invariant subspace of $\calH$ containing $\ket{\varphi}$. By Maschke's Theorem we may assume that $\Res^G_L\pi|_V$ is an irreducible projective representation of $L$ on $V$. Let $\delta := \Res^G_L\pi|_V$, then by the same computation as above we see that
    \begin{equation*}
        T \ket{\varphi} = \frac{\dim\calC}{\dim V}\langle\chi_\delta, \chi_\gamma\rangle\ket{\varphi}.
    \end{equation*}
    By Schur's Lemma we have that
    \begin{equation*}
        \langle\chi_\delta, \chi_\gamma\rangle =
        \begin{cases}
            1, & \delta \simeq \gamma, \\
            0, & \delta \not\simeq \gamma.
        \end{cases}
    \end{equation*}
    By construction we have that $\delta \neq \gamma$, but they might still be isomorphic. To see that they cannot be isomorphic, notice that both are subrepresentations of $\Res^G_L\pi$. Combining Maschke's Theorem and Schur's Lemma we have that the number of subrepresentations of $\Res^G_L\pi$ that are isomorphic to $\gamma$ equals
    \begin{equation*}
        \dim \Hom_L(\gamma, \Res^G_L\pi).
    \end{equation*}
    We compute that
    \begin{align*}
        \dim \Hom_L(\gamma, \Res^G_L\pi)
        & = \dim \Hom_G(\Ind_L^G\gamma, \pi) & \text{by Frobenius reciprocity}, \\
        & = \dim \Hom_G(\pi, \pi) & \text{since $\calC$ is a Clifford subspace code}, \\
        & = 1 & \text{by Schur's Lemma}.
    \end{align*}
    Hence, $\gamma$ is the only subrepresentation of $\Res^G_L\pi$ that is isomorphic to $\gamma$, from which we conclude that $T\ket{\varphi} = 0$.
\end{proof}

Given a coset transversal $T$ for $G / L$, we know from Proposition~\ref{prop:Hybrid structure of Clifford codes} that the subspaces $\pi(t)\mathcal C$, for $t\in T$, are mutually orthogonal and span $\mathcal H$. The projections onto these coset subspaces, subsets of which define the hybrid codes, have an explicit form as follows. 

\begin{corollary}\label{cor:cosetprojectionform}
    Let $\calH$ be a finite dimensional Hilbert space and $(G,\pi)$ be a projective error model on $\calH$. Suppose that $\calC = \calC(G,\pi,L,\gamma)$ is a Clifford subspace code and let $T$ be a coset transversal for $G/L$. Then the orthogonal projections $\{P_t: t\in T\}$ onto the pairwise orthogonal subspaces $\pi(t) \mathcal C$ are given by:  
    \begin{equation*}
        P_t 
        =  \pi(t) \, P_\calC \, \pi(t)^{-1} 
        = \frac{\dim\calC}{|L|}\sum_{x \in L}\big( \overline{\chi_\gamma(t^{-1} x t)} \, \sigma_\pi( t^{-1} x t, t^{-1} ) \, \sigma_\pi( t, t^{-1} x ) \big) \, \pi(x).
    \end{equation*}
    Moreover, each projection $P_t$ is defined independent of coset representative; that is, $P_t = P_{ts}$ for all $s \in L$.  
\end{corollary}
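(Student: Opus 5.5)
The plan has three parts. First, show that $\pi(t)P_\calC\pi(t)^{-1}$ is the orthogonal projection onto $\pi(t)\calC$. Second, substitute the formula of Proposition~\ref{lemma:code projection} into this conjugate and re-index using normality of $L$. Third, deduce independence of the coset representative.

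\emph{Step 1.} Since $\pi(t)$ is unitary, $\pi(t)P_\calC\pi(t)^{-1}$ is a self-adjoint idempotent whose range is $\pi(t)\calC$. Hence it equals $P_t$. Mutual orthogonality of these ranges and $\sum_{t\in T}P_t=I$ are already given by Proposition~\ref{prop:Hybrid structure of Clifford codes}.

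\emph{Step 2.} Insert the formula of Proposition~\ref{lemma:code projection}:
\[
P_t=\frac{\dim\calC}{|L|}\sum_{y\in L}\overline{\chi_\gamma(y)}\,\pi(t)\pi(y)\pi(t)^{-1}.
\]
By \eqref{inverse}, $\pi(t)^{-1}=\overline{\sigma_\pi(t,t^{-1})}\,\pi(t^{-1})$. Applying \eqref{projectiverelation} twice gives
\[
\pi(t)\pi(y)\pi(t^{-1})=\sigma_\pi(t,y)\,\sigma_\pi(ty,t^{-1})\,\pi(tyt^{-1}).
\]
Because $L$ is normal, $y\mapsto x=tyt^{-1}$ is a bijection of $L$, so we substitute $y=t^{-1}xt$. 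The coefficient of $\pi(x)$ then becomes
\[
\overline{\chi_\gamma(t^{-1}xt)}\;\sigma_\pi(t,t^{-1}xt)\,\sigma_\pi(xt,t^{-1})\,\overline{\sigma_\pi(t,t^{-1})}.
\]

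\emph{Step 3 (main obstacle).} The remaining work is to show that this product of three cocycle values equals $\sigma_\pi(t^{-1}xt,t^{-1})\,\sigma_\pi(t,t^{-1}x)$. This is pure bookkeeping with \eqref{cocycle}. I would first apply the cocycle identity to the triple $(t,\,t^{-1}xt,\,t^{-1})$:
\[
\sigma_\pi(t,t^{-1}xt)\,\sigma_\pi(xt,t^{-1})=\sigma_\pi(t,t^{-1}x)\,\sigma_\pi(t^{-1}xt,t^{-1}).
\]
With this, the product of the first two cocycle values is already the claimed expression, and a leftover factor $\overline{\sigma_\pi(t,t^{-1})}$ remains.

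This leftover factor has to be tracked carefully. It may be absorbed by a normalization of $\sigma$ or of the transversal (for example, one could choose representatives with $\sigma_\pi(t,t^{-1})=1$), or it may indicate that the stated formula implicitly uses $\pi(t)^{-1}$ in place of $\pi(t^{-1})$. I would recheck the sign conventions here, possibly replacing the triple above with $(t,\,t^{-1}x,\,t)$ combined with \eqref{inverse}. This is the step I expect to need the most care.

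\emph{Step 4.} For independence of the representative, let $s\in L$. Then $\pi(ts)$ is a unimodular scalar multiple of $\pi(t)\pi(s)$ by \eqref{projectiverelation}. Since $s\in L=L_{(G,\pi)}(\calC)$ (Theorem~\ref{thm:clifford and stabilizer code facts}), $\pi(s)$ is unitary and commutes with $P_\calC$. Therefore $\pi(s)\calC=\calC$, so $\pi(ts)\calC=\pi(t)\calC$, and hence $P_{ts}=P_t$ by Step 1. The scalar phases cancel under conjugation.
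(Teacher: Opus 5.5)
Your Steps 1, 2 and 4 are correct and follow the paper's route. You substitute the formula of Proposition~\ref{lemma:code projection}, reindex using normality of $L$, and apply the cocycle identity for the triple $(t,t^{-1}xt,t^{-1})$, which is exactly the identity the paper quotes. For $P_{ts}=P_t$, the paper verifies $\pi(s)P_\calC\pi(s)^{-1}=P_\calC$ through the twisted class-function property of $\chi_\gamma$; your appeal to $L=L_{(G,\pi)}(\calC)$ is shorter and equally valid.

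The problem is Step 3, which you leave as a hedge. It cannot be closed, because the leftover factor is genuine. Since $\pi(t)^{-1}=\overline{\sigma_\pi(t,t^{-1})}\,\pi(t^{-1})$, your computation shows that the displayed sum equals $\pi(t)P_\calC\pi(t^{-1})=\sigma_\pi(t,t^{-1})\,P_t$. As $P_t\neq 0$, the second equality in the statement therefore holds if and only if $\sigma_\pi(t,t^{-1})=1$. Your suspicion that the formula really conjugates by $\pi(t^{-1})$ is exactly right. The paper's proof drops this factor, while keeping the analogous $\overline{\sigma(s,s^{-1})}$ in its ``moreover'' computation. Your proposed escapes do not help: $T$ is an arbitrary transversal, $\sigma_\pi$ is fixed by the given $\pi$ (renormalizing it means rephasing $\pi$), and no rearrangement of cocycle identities removes a factor that is really there.

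A concrete check: take the one-qubit Pauli model $G=\bbZ_2\times\bbZ_2$, $\pi(a,b)=X^aZ^b$, $\sigma_\pi((a_1,b_1),(a_2,b_2))=(-1)^{b_1a_2}$. Let $L=\{(0,0),(0,1)\}$, $\calC=\Span\{\ket{0}\}$, $\gamma\equiv 1$, so $P_\calC=\tfrac12(I+Z)$. For $t=(1,0)$ the displayed formula gives $\tfrac12(I-Z)=P_t$. For $t=(1,1)$, in the same coset, it gives $\tfrac12(Z-I)=-P_t$, because $\sigma_\pi(t,t^{-1})=-1$. So the displayed expression is not even independent of the representative.

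What you should actually state and prove is the corrected formula
\[
P_t=\frac{\dim\calC}{|L|}\sum_{x\in L}\overline{\chi_\gamma(t^{-1}xt)}\;\sigma_\pi(t,t^{-1}xt)\,\overline{\sigma_\pi(x,t)}\;\pi(x).
\]
It follows in one line from $\pi(t)\pi(t^{-1}xt)=\sigma_\pi(t,t^{-1}xt)\,\pi(xt)$ and $\pi(xt)=\overline{\sigma_\pi(x,t)}\,\pi(x)\pi(t)$. It agrees with your Step 2 expression via the cocycle identity for $(x,t,t^{-1})$. Equivalently, it is the paper's coefficient multiplied by $\overline{\sigma_\pi(t,t^{-1})}$.
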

\begin{proof} 
    The equation for $P_t$ follows from Proposition~\ref{lemma:code projection} and the definition of the function $\sigma_\pi$, together with the fact that $L$ is a normal subgroup. Note that the cocycle relation also gives us that: $\sigma_\pi( t^{-1} x t, t^{-1} ) \, \sigma_\pi( t, t^{-1} x ) = \sigma_\pi( t, t^{-1} x t ) \, \sigma_\pi( x t, t^{-1} )$, which gives us another presentation of $P_t$.

    For the last statement, note that for $s\in L$, we have  
    \[
        \pi(ts) \, P_\calC \, \pi(ts)^{-1} = (\overline{\sigma_\pi(t,s)} \pi(t) \pi(s) ) \, P_\calC \, ( \overline{\sigma_\pi(t,s)}\pi(t) \pi(s) )^{-1}  
        =   \pi(t) \, ( \pi(s) P_\calC \pi(s)^{-1} ) \, \pi(t)^{-1} .  
    \]
    For $s, x \in L$, we can use the fact that $\chi_\gamma$ is a $\sigma_\pi$-class function on $L$ (since $\sigma_\gamma = \Res^G_L\sigma_\pi$), to compute that
    \begin{align*}
        \overline{\chi_\gamma (x)} \,\pi(s) \pi(x) \pi(s)^{-1} 
        & = \big( \overline{\chi_\gamma (x)} \sigma(s,x) \sigma(sx,s^{-1}) \overline{\sigma(s,s^{-1})} \big) \,\pi(s x s^{-1}) \\ 
        & = \big( \overline{\chi_\gamma (sxs^{-1})} \overline{\sigma(s,x)} \sigma(sxs^{-1}, s)  \sigma(s,x) \sigma(sx,s^{-1}) \overline{\sigma(s,s^{-1})} \big) \,\pi(s x s^{-1}) \\ 
        & = \overline{\chi_\gamma (sxs^{-1})}  \,\pi(s x s^{-1})  .
    \end{align*}
    It follows from this calculation and the formula for $P_{\mathcal C}$ that $\pi(s) \, P_\calC \, \pi(s)^{-1} = P_\calC$, and so $P_t =  \pi(ts) \, P_\calC \, \pi(ts)^{-1}$ as claimed. 
\end{proof} 

\subsection{Error Correction Theorem}

As in the recent work \cite{DauphinaiKribsVasmer2024}, we shall formulate and prove the main error correction theorem for these Clifford codes from the operator algebra quantum error correction perspective \cite{Beny2007Generalization,Beny2007Quantum}. We first give a condensed review of the main error correction result in that framework. The starting point is a reformulation of quantum error correction from the Heisenberg picture for quantum dynamics (i.e., focus on the time evolution of observables), rather than the more traditional Schr\"odinger picture (i.e., focus on evolution of states) used in the subject. Mathematically, this naturally leads to the notion of codes identified with operator algebras $\mathcal A$, which allow, as discussed in Section~\ref{prelims}, an embracing of regular subspace and subsystem codes along with their hybrid classical and quantum varieties. See \cite{Beny2007Generalization,Beny2007Quantum,DauphinaiKribsVasmer2024} for more details on these points. 

For our purposes here, we simply recall the following description of correctability: An algebra $\mathcal A$, with unit projection $Q$, is {\it correctable} for a set of error operators $\mathcal E = \{ E_a \}$ if and only if for all $X \in \mathcal A,$ and for all $a,b$, we have the commutator relation: 
\[
    [Q E_a^* E_b Q, X] = 0 . 
\]
When translated back to the Schr\"odinger picture, in the case that $\mathcal A = P_{\mathcal C} \mathcal L(\mathcal H) P_{\mathcal C}$ for some subspace $\mathcal C$ of $\mathcal H$, we recover standard (subspace) correctable codes, and then subsystem codes when the subspace has a tensor decomposition. Algebras with direct sums in their decompositions allow for hybrid features in the encodings. Explicitly, in the Schr\"odinger picture, an operator algebra $\mathcal A$ is correctable for an error map $\mathcal E$ if and only if there exists a recovery map $\mathcal R$ such that for any density operator $\rho = \sum_i \alpha_i (\tau_i \otimes \rho_i)$ with $\tau_i \in \mathcal T(A_i)$, $\rho_i \in \mathcal T(B_i)$ (the trace class operators on $A_i,B_i$), and nonnegative scalars $\sum_i \alpha_i=1$, there are density operators $\tau_i' \in \mathcal T(A_i)$ for which
\[
    (\mathcal R\circ\mathcal E)(\rho) 
    = \sum_i \alpha_i \mathcal R \bigl({\mathcal E \bigl({\tau_i \otimes \rho_i}\bigr)}\bigr) 
    = \sum_i \alpha_i ( \tau_i' \otimes \rho_i).
\]
It is this notion correctability that we focus on below. 

\begin{definition}
    Let $\calH$ be a finite dimensional Hilbert space and $(G,\pi)$ be a projective error model on $\calH$. Suppose that $A \otimes B = \calC(G,\pi,\calL,\calG,T_0)$ is a hybrid Clifford subsystem code. We say that the code is \emph{correctable} for a set of errors $\{g_a\}_{a} \subset G$ if the Clifford code operator algebra $\mathcal A (G,\pi,\calL,\calG,T_0)$ is correctable for the set $\{\pi(g_a)\}_{a}$.
\end{definition} 

It follows from the result quoted above and the structure of operator algebras and their commutants discussed in Section~\ref{prelims}, that an algebra $\mathcal A$ with decomposition as given in  Section~\ref{prelims}  is correctable for $\{ E_a \}$ if and only if for all $a,b$ there are operators $X_{abi} \in \mathcal L(B_i)$ such that
\begin{equation}\label{oaqeceqn}
Q E_a^* E_b Q = \sum_i I_{A_i} \otimes X_{abi},
\end{equation}
where here the operators $I_{A_i} \otimes X_{abi}$ are understood to act on $A_i \otimes B_i$, and so the sum (when there is a sum with more than one term) is thus an orthogonal direct sum of operators. 

The case of a sum with a single term captures the Knill-Laflamme error correction conditions \cite{knill1997theory} when $\dim B_1 = 1$ (and so $X_{kl1}$ are complex scalars), and the operator quantum error correction testable conditions \cite{Kribs2005Unified,Kribs2006oqec} when $\dim B_1 > 1$. We note the special case of the right side of Eq.~(\ref{oaqeceqn}) obtained when there is a bona fide sum and there are no subsystem structures within the summand spaces. If $\{P_i \}$ are the projections onto the different subspaces, then $Q= \sum_i P_i$, and the conditions become: 
\begin{equation*}
    P_i E_a^* E_b P_j = \delta_{ij} \lambda_i P_i,
\end{equation*}
for some scalars $\lambda_i$, which one can see obtain by multiplying the left and right of both sides of Eq.~(\ref{oaqeceqn}) by $P_i$ and $P_j$ respectively. (In the case that the $P_i$ have a common rank, these conditions were obtained independently in \cite{grassl2017codes} as noted in \cite{majidy2018unification}.) 

Recall that for a Clifford subsystem code $\calC$, we assume that the group of stabilizers $S_{(G,\pi)}(\calC)$ is a subgroup of the gauge group $\calG$.

\begin{theorem}\label{errorcorrecthm}
    Let $\calH$ be a finite dimensional Hilbert space and $(G,\pi)$ be a projective error model on $\calH$. Suppose that $\calC = A \otimes B = \calC(G,\pi,\calL,\calG,T_0)$ is a hybrid Clifford subsystem code where the hybrid structure is given by the classical logical operators $\{t_i\}_{i} = T_0$. Let $L = \calL\calG$ denote the group defining the Clifford subspace structure of $\calC$. Then a set of errors $\{g_a\}_{a} \subset G$ is correctable if and only if, for all $a$ and $b$,
    \begin{equation}\label{eqn:detectability condition}
        g_a^{-1}g_b \notin \big(L \setminus \calG\big) \bigcup \left( \bigcup_{i \neq j} t_i^{-1} \, L \, t_j \right).
    \end{equation}
\end{theorem}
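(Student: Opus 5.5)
Since each $\pi(g)$ is unitary and $\pi(g_a)^*\pi(g_b) = \pi(g_a)^{-1}\pi(g_b)$ is a unimodular scalar multiple of $\pi(g_a^{-1}g_b)$ by \eqref{projectiverelation} and \eqref{inverse}, the plan is to reduce everything to the behaviour of a single group element $g := g_a^{-1}g_b$ with respect to the projections $P_{t_i}$ of Corollary~\ref{cor:cosetprojectionform}. With $Q = \sum_{i} P_{t_i}$ the unit projection of $\mathcal A(G,\pi,\calL,\calG,T_0)$, condition \eqref{oaqeceqn} for this algebra reads: $P_{t_i}\pi(g)P_{t_j} = 0$ for $i\neq j$, and $P_{t_i}\pi(g)P_{t_i} \in \pi(t_i)(I_A \otimes \calL(B))\pi(t_i)^{-1}$ for each $i$. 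I will check these two conditions separately and show that together they are equivalent to $g$ avoiding the set in \eqref{eqn:detectability condition}.

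\textbf{Off-diagonal blocks.} I claim that for $i \neq j$, $P_{t_i}\pi(g)P_{t_j} = 0$ iff $g \notin t_i L t_j^{-1}$ (up to the harmless relabelling of inverses in the statement, which I reconcile by noting the condition is applied to all ordered pairs $a,b$, and $g_b^{-1}g_a = g^{-1}$, so the sets $\bigcup_{i\ne j} t_i^{-1}Lt_j$ and $\bigcup_{i\neq j} t_iLt_j^{-1}$ are exchanged by inversion, using normality of $L$). By Proposition~\ref{prop:Hybrid structure of Clifford codes} the space $\calH$ is the orthogonal direct sum $\bigoplus_{t\in T}\pi(t)\calC$. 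Since $\pi(g)\pi(t_j)$ is a scalar multiple of $\pi(gt_j)$, and $gt_j = t s$ for a unique $t\in T$ and $s \in L$, we get $\pi(g)\pi(t_j)\calC = \pi(t)\pi(s)\calC = \pi(t)\calC$, because $L = L_{(G,\pi)}(\calC)$ by Theorem~\ref{thm:clifford and stabilizer code facts}. So $\pi(g)$ maps the $t_j$-block unitarily onto the $t$-block, where $t$ is the representative of $gt_jL$. Hence $P_{t_i}\pi(g)P_{t_j} \neq 0$ iff $gt_jL = t_iL$, i.e.\ $g \in t_iLt_j^{-1}$. The same argument shows that for $i=j$ the diagonal block is nonzero only if $g \in t_iLt_i^{-1} = L$, using normality.

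\textbf{Diagonal blocks.} If $g \notin L$, every block vanishes and the condition holds trivially. If $g\in L$, conjugation by $\pi(t_i)$ reduces the question to the base code: $\pi(t_i)^{-1}\pi(g)\pi(t_i)$ is a scalar multiple of $\pi(t_i^{-1}gt_i)$ with $t_i^{-1}gt_i \in L$, and we need $P_\calC\pi(h)P_\calC = \pi(h)|_\calC = \gamma(h) \in I_A\otimes\calL(B)$ for $h = t_i^{-1}gt_i$. Writing $h = lk$ with $l\in\calL$, $k\in\calG$, we have $\gamma(h) \simeq \alpha(l)\otimes\beta(k)$ up to a phase. This lies in $I_A\otimes \calL(B)$ iff $\alpha(l)$ is a scalar, iff $l \in \ker(q\circ\alpha)$. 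Because $\alpha$ is projectively faithful on $\calL$ (using $S_{(G,\pi)}(\calC)=\ker(q\circ\gamma)\subseteq\calG$ and $\calL\cap\calG=\{e\}$), this happens iff $l=e$, i.e.\ $h\in\calG$.

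\textbf{Main obstacle.} The hardest step is the condition for the diagonal blocks with $i \neq 1$: we need $t_i^{-1}gt_i\in\calG \iff g\in\calG$, which is what the statement implicitly asserts. I would first try to deduce it from the fact that $\pi(t_i)$ maps $\calC$ to $\pi(t_i)\calC$ while carrying the subsystem structure, as noted before Lemma~\ref{lemma:subsystem structure of Clifford code}. If that is insufficient, the fallback is to argue that conjugation by $t_i$ preserves $\calG$ modulo $S_{(G,\pi)}(\calC)$. Failing both, I would interpret the subsystem decomposition on $\pi(t_i)\calC$ via $\pi(t_i)\calG\pi(t_i)^{-1}$, so that the algebra in Definition~\ref{def:hybrid Clifford subsystem code} matches and the equivalence holds by construction. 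Combining the two block analyses then gives the theorem.
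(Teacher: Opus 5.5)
Your route is the paper's. You split into the blocks $P_{t_i}\pi(g)P_{t_j}$ with $g=g_a^{-1}g_b$, dispose of the off-diagonal blocks by a coset argument, and pull each diagonal block back to $\calC$ by conjugating with $\pi(t_i)$. Your base-code step is correct: $\alpha(l)\otimes\beta(k)\in I_A\otimes\calL(B)$ forces $l\in\ker(q\circ\gamma)\cap\calL=\{e\}$.

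The gap is exactly the one you flagged, and no argument can close it, because the needed equivalence is false under the stated hypotheses. What your analysis gives on block $i$ is $t_i^{-1}gt_i\in\calG$, i.e.\ $g\in t_i\calG t_i^{-1}$. Converting this to $g\in\calG$ requires every $t_i\in T_0$ to normalize $\calG$. The hypotheses only give $\calG\trianglelefteq L\trianglelefteq G$, and normality is not transitive. The paper's proof crosses this step by asserting that $\calG$, being normal in $L$, is normal in $G$, which is the same invalid inference.

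Here is a counterexample to the statement. On $(\bbC^2)^{\otimes 3}$ let $L\cong\bbZ_2^5$ act by $\pi(L)=\{Z^e\otimes P\otimes Q\}$ with $P,Q\in\{X^aZ^b\}$, and let $G=L\rtimes\langle t\rangle$ with $\pi(t)=X\otimes\mathrm{SWAP}_{23}$.
\begin{itemize}
\item $\pi$ is projectively faithful and irreducible. An operator commuting with $\pi(L)$ is $M\otimes I_4$ with $M$ diagonal, and commuting with $\pi(t)$ forces $XMX=M$.
\item $\calC=|0\rangle\otimes\bbC^2\otimes\bbC^2$ is a Clifford code for $L$. By Frobenius reciprocity the inclusion $\calC\hookrightarrow\calH$ gives a nonzero intertwiner $\Ind_L^G\gamma\to\pi$ between $8$-dimensional representations, and $\pi$ is irreducible.
\item Take $\calL=\langle x_2,z_2\rangle$ (Paulis on qubit $2$) and $\calG=\langle z_1,x_3,z_3\rangle$ ($Z$ on qubit $1$ and Paulis on qubit $3$). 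These give a Clifford subsystem code with $S_{(G,\pi)}(\calC)=\langle z_1\rangle\subseteq\calG$.
\end{itemize}
With $T_0=\{e,t\}$ the code algebra is $\{|0\rangle\langle 0|\otimes M\otimes I+|1\rangle\langle 1|\otimes I\otimes N\}$. So on $\pi(t)\calC$ the logical qubit is qubit $3$. The error set $\{e,x_3\}$, with $\pi(x_3)=I\otimes I\otimes X$, satisfies the theorem's condition, since $x_3\in\calG$ and $x_3\notin tL$. Yet $\pi(x_3)$ fails to commute with $|1\rangle\langle 1|\otimes I\otimes Z\in\mathcal{A}$, so the set is not correctable.

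Your fallback (2) fails here, since $tx_3t^{-1}=x_2\notin\calG$. Fallbacks (1) and (3) are the transport of structure that $\mathcal{A}(G,\pi,\calL,\calG,T_0)$ already builds in, and they yield $g\in t_i\calG t_i^{-1}$, not $g\in\calG$. The theorem needs an extra hypothesis such as $t_i\calG t_i^{-1}=\calG$ for all $t_i\in T_0$ (automatic if $\calG\trianglelefteq G$, e.g.\ for abelian $G$ as in the Pauli model). Alternatively, $L\setminus\calG$ must be replaced by $L\setminus\bigcap_i t_i\calG t_i^{-1}$.

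Separately, your off-diagonal computation is correct: $P_{t_i}\pi(g)P_{t_j}\neq 0$ iff $gt_jL=t_iL$, i.e.\ $g\in t_iLt_j^{-1}$. Your reconciliation with the statement's $t_i^{-1}Lt_j$ is wrong, though. Inversion sends $t_iLt_j^{-1}$ to $t_jLt_i^{-1}$, so $\bigcup_{i\neq j}t_iLt_j^{-1}$ and $\bigcup_{i\neq j}t_i^{-1}Lt_j$ are each closed under inversion, and inversion does not exchange them. They agree when $G/L$ is abelian, since then $t_it_j^{-1}L=t_j^{-1}t_iL$, but not for general $T_0$ in a nonabelian quotient. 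The paper reaches $t_i^{-1}Lt_j$ by writing $t_igt_j^{-1}\notin L$, whereas $\pi(t_i)^{-1}\pi(g)\pi(t_j)\in\bbT\,\pi(t_i^{-1}gt_j)$ actually gives $t_i^{-1}gt_j\notin L$. Your version is the right one, and the discrepancy should be stated rather than argued away. One small correction: when $g\notin L$ only the diagonal blocks vanish, not every block.
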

\begin{proof}
    Note that each transversal subspace has a subsystem structure $A_i \otimes B_i := \pi(t_i) \mathcal C$ induced by the subsystem structure of $\mathcal C = A \otimes B$ and the unitary action of $\pi(t_i)$. Recall further that $P_i = \pi(t_i) P_{\mathcal C} \,\pi(t_i)^{-1}$ is the projection onto $\pi(t_i) \mathcal C$ by Corollary \ref{cor:cosetprojectionform}. In particular, this gives an identification of operators in $\pi(t_i) (\mathcal L(A)\otimes \mathcal L(B)) \pi(t_i)^{-1}$ with operators in $\mathcal L(A_i)\otimes \mathcal L(B_i)$. Specifically note that for any $X_B \in \mathcal L(B)$ this maps $I_A \otimes X_B$ to $I_{A_i}\otimes X_{B_i}$ for some $X_{B_i}\in \mathcal L(B_i)$.
    
    Let $Q = \sum_i P_i$, this is the unit projection in $\mathcal A (G,\pi,\calL,\calG,T_0)$. Given a $g \in G$, we want to show that there exists $X_{gi} \in \calL(B_i)$ for each $i$ such that
    \[
        Q\pi(g)Q = \sum_i I_{A_i} \otimes X_{gi}
    \]
    if and only if \eqref{eqn:detectability condition} holds for the element $g$ in place of $g_ag_b^{-1}$. By Proposition \ref{prop:Hybrid structure of Clifford codes} it suffices to show that
    \[
        P_i \pi(g) P_j =
        \begin{cases}
            I_{A_i} \otimes X_{gi} & \text{if } i = j, \\
            0 & \text{if } i \neq j.
        \end{cases}
    \]
    if and only if \eqref{eqn:detectability condition} holds for the element $g$. We first focus on the case where $i = j$. Then we have that
    \[
        P_i \pi(g) P_i
        = \pi(t_i) \big( P_\calC \pi(t_i)^{-1} \pi(g) \pi(t_i) P_\calC \big) \pi(t_i)^{-1}.
    \]
    By the proof of \cite[Theorem 6.3]{Eidesen2025} we have that $P_\calC \pi(t_i)^{-1} \pi(g) \pi(t_i) P_\calC = 0$ if and only if $t_igt_i^{-1} \not\in L$, since $L$ is a normal subgroup of $G$ this is equivalent to saying that $g \not\in L$. If $t_igt_i^{-1} \in L$ then we have, up to a scalar multiple of the identity, that
    \[
        P_\calC \pi(t_i)^{-1} \pi(g) \pi(t_i) P_\calC
        = P_\calC \pi(t_i^{-1}gt_i) P_\calC
        = (\alpha \otimes \beta)(t_i^{-1}gt_i),
    \]
    where the last equality follows since $\calC = A \otimes B$ is a Clifford subsystem code. It is then clear that 
    \[
        P_\calC \pi(t_i)^{-1} \pi(g) \pi(t_i) P_\calC = I_A \otimes X_g
    \]
    for some $X_g \in \calL(B)$ if and only if $t_igt_i^{-1} \in \calG$. By the comments above, we then have that
    \[
        \pi(t_i) \big( P_\calC \pi(t_i)^{-1} \pi(g) \pi(t_i) P_\calC \big) \pi(t_i)^{-1}
        = \pi(t_i) \big( I_A \otimes X_g \big) \pi(t_i)^{-1}
        = I_{A_i} \otimes X_{gi}
    \]
    for som $X_{gi} \in \calL(B_i)$ if and only if $t_igt_i^{-1} \in \calG$. Since $\calG$ is a normal subgroup $L$, $\calG$ is in particular a normal subgroup of $G$. Hence, $t_igt_i^{-1} \in \calG$ if and only if $g \in \calG$.

    So far we have showed that
    \[
        P_i \pi(g) P_i = I_{A_i} \otimes X_{gi}
    \]
    for some $X_{gi} \in \calL(B_i)$ if and only if $g \not\in L \setminus \calG$. We will now focus on the case for $P_i \pi(g) P_j$ with $i \neq j$. We compute that
    \[
        P_i \pi(g) P_j
        = \pi(t_i) \big( P_\calC \pi(t_i)^{-1} \pi(g) \pi(t_j) P_\calC \big) \pi(t_j)^{-1}.
    \]
    As $\pi(t_i)$ and $\pi(t_j)^{-1}$ are unitary operators we have that this is zero if and only if $P_\calC \pi(t_i)^{-1} \pi(g) \pi(t_j) P_\calC = 0$. Again by the proof of \cite[Theorem 6.3]{Eidesen2025} we have that $P_\calC \pi(t_i)^{-1} \pi(g) \pi(t_j) P_\calC = 0$ if and only if $t_i g t_j^{-1} \not\in L$, which is the case if and only if $g \not\in t_i^{-1} L t_j$, which completes the proof.
\end{proof}

\begin{remark}
    Note that Eq.~\eqref{eqn:detectability condition} is equivalent to saying that
    \begin{equation}\label{eqn:detectability condition alternate}
        g_ag_b^{-1} \in \big((G \setminus L) \cup \calG \big) \cap \left(\bigcap_{i \neq j} G \setminus (t_i^{-1}Lt_j)\right), 
    \end{equation}
    and recall that we have the set equalities $t_i^{-1}L\, t_j = t_i^{-1} t_j \, L = L \, t_i^{-1}t_j$ as $L$ is normal. 
\end{remark}

\begin{remark}
    Most recently, this result builds upon \cite[Theorem 6.3]{Eidesen2025}. Furthermore, note that \cite[Proposition 3.5]{Eidesen2025} gives a way of translating between error correction theorems for projective error models, and (linear) error models. This allows us to view Theorem \ref{errorcorrecthm} as a generalization of several results in the literature. Most notably, if we use the projective error model from Example~\ref{example:Pauli projective error model} we recover the usual Pauli error correction theorems; starting with Gottesman's original result \cite{gottesman1996class,gottesman1997stabilizer}, then Poulin's \cite{Poulin2005Stabilizer} for subsystem stabilizer codes, and the recent extension \cite{DauphinaiKribsVasmer2024} to hybrid and operator algebra codes. More generally, if there is no hybrid or subsystem structure present, we recover the error correction theorems found in \cite{KnillI96,KnillII96,KlappeneckerRottelerII02} for regular Clifford codes. If there is a subsystem structure we furthermore recover the error correction theorem from \cite{klappenecker2010clifford}. 
\end{remark} 

We discuss a simple illustrative example coming from one of our non-stabilizer codes presented above. 

\begin{example}\label{example:correction for XPZ}
    Recall the projective error model $(G,\pi)$ on $\bbC^2 \otimes \bbC^2$ defined in Example \ref{cdexample}, where $G = C_2 \times D_{2d}$ and
    \[
        \pi(c^kb^la^m) = X^kZ^m \otimes X^lP^m.
    \]
    Here $\langle c \rangle = C_2$ and $\langle a,b \rangle = D_{2d}$.
    
    Consider the $n$-fold extension of the code in Example~\ref{nonstabilizer}; that is, we are considering the projective error model $(G^n,\pi^{\otimes n})$ on $\mathcal H = (\mathbb C^2\otimes \mathbb C^2)^{\otimes n}$ with the $2^n$-dimensional code $\overline{\mathcal C} = \mathcal C^{\otimes n}$. Recall here that $\mathcal C = \mathrm{span}\{ \ket{00}, \ket{01} \}$ and $\calC = \calC(G,\pi,L,\gamma)$ with $L = D_{2d}$ and
    \[
        \gamma(b^la^m) = X^lP^m.
    \]
    Then $\overline{\calC} = \calC(G^n,\pi^{\otimes n},L^n,\gamma^{\otimes n})$, which is a Clifford subsystem code by \cite[Proposition 10.1]{Eidesen2025}. For ease of notation we will relabel $G^n$ by $G$, $\pi^{\otimes n}$ by $\pi$, $L^n$ by $L$, and $\gamma^{\otimes n}$ by $\gamma$.
    
    A coset transversal for $G / L$ consists of $2^n$ elements, the most canonical choice of such a coset transversal is
    \[
        T = \{ (c_j^{k_j})_{j = 1}^{n} : k_j \in \{0,1\}\}.
    \]
    Then,
    \[
        \pi(T) = \{ (X^{k_1} \otimes I_2) \otimes \cdots \otimes (X^{k_n} \otimes I_2) : k_j \in \{0,1\}\}.
    \]
    For a specific example of a code and application of the theorem above, let $n=2$, so 
    \[
        \overline{\calC} = \mathrm{span}\{ \ket{0000}, \ket{0001}, \ket{0100}, \ket{0101}  \}.
    \]
    If we set $\calL = D_{2d} \times \{e\}$, $\calG = \{e\} \times D_{2d}$ then we have that $\overline{\calC} \simeq A \otimes B$ with $A = \Span\{\ket{0100}, \ket{0101}\}$, $B = \Span\{\ket{0000}, \ket{0001}\}$, defines a subsystem structure on $\overline{\calC}$, i.e. $A \otimes B = \calC(G,\pi,\calL,\calG)$. If we really think of $\calH$ as four qubits, then $\overline{\calC}$ is obtained by restricting to the second and fourth qubit, where the second qubit corresponds to the logical qubit, and the fourth corresponds to the gauge qubit. Note that
    \begin{align*}
        \pi(\calL) & = \{ (Z^m \otimes X^lP^m)\otimes(I_2 \otimes I_2) : l \in \{0,1\}, \ m \in \{0,1,\ldots,2d\} \}, \text{ and} \\
        \pi(\calG) & = \{ (I_2 \otimes I_2)\otimes(Z^m \otimes X^lP^m) : l \in \{0,1\}, \ m \in \{0,1,\ldots,2d\} \},
    \end{align*}
    which makes it more clear how the logical group and the gauge group acts on the Clifford subsystem code. We may further choose a classical logical operator subset $T_0$ of the coset transversal such that
    \[
        \pi(T_0) = \{(I_2 \otimes I_2 \otimes I_2 \otimes I_2), (X \otimes I_2 \otimes I_2 \otimes I_2)\},
    \]
    which defines a hybrid Clifford subsystem code $\calC(G,\pi,\calL,\calG,T_0)$ with code space
    \[
        \bigg(\Span\{\ket{0000}, \ket{0001}, \ket{0100}, \ket{0101}\}\bigg) \bigoplus \bigg(\Span\{\ket{1000}, \ket{1001}, \ket{1100}, \ket{1101}\}\bigg).
    \]
    Applying Theorem \ref{errorcorrecthm}, one can verify that $\calC(G,\pi,\calL,\calG,T_0)$ is correctable for a subset of errors $\{g_a\}_a \subset G$ if and only if for all $a$ and $b$: 
    \[
        \pi(g_ag_b^{-1}) = 
        \begin{cases}
            (I_2 \otimes I_2) \otimes (X^{k}Z^m \otimes X^lP^m) & \text{or}, \\
            (X \otimes I_2) \otimes (X^{k} \otimes I_2), &
        \end{cases}
    \]
    where $k, l \in \{0,1\}$, and $m \in \{0,1,\ldots,2d\}$.
\end{example} 

Lastly, let us note that we can define a notion of code distance based on Theorem~\ref{errorcorrecthm}, and in particular from Eq.~(\ref{eqn:detectability condition}), when the underlying Hilbert space has a tensor subsystem structure. 

\begin{definition}\label{def:distance}
    Let $\calH$ be a finite-dimensional Hilbert space presented as the tensor product of subsystem Hilbert spaces, and let $(G,\pi)$ be a projective error model on $\calH$ for which the operators in $\pi(G)$ are represented as tensor products in the subsystem decomposition. Suppose that $\calC = A \otimes B = \calC(G,\pi,\calL,\calG,T_0)$ is a hybrid Clifford subsystem code with $T_0 = \{t_i\}_{i}$. Let $L = \calL\calG$ denote the group defining the Clifford subspace structure of $\calC$. We define the {\it code distance} $d(G,\pi,\calL,\calG,T_0)$ as follows:
    \begin{equation}\label{codedistance}
        d(G,\pi,\calL,\calG,T_0) = \min\mathrm{wt} \left( \pi\left(\big(L \setminus \calG\big) \bigcup \left( \bigcup_{i \neq j} t_i^{-1} \, L \, t_j \right)\right)\right), 
    \end{equation} 
    where $\min \mbox{wt}(A)$ is the minimum of the weights of operators in a set $A \subseteq \pi(G)$, and $\mbox{wt}(A)$ is the number of subsystems of $\mathcal H$ on which $A$ does not act as the identity operator.
\end{definition}

\begin{remark}
    This notion of code distance includes a number of special cases of interest, some of which have been investigated. For non-hybrid codes, the union in Eq.~(\ref{codedistance}) is empty and one captures regular Clifford codes, for which the distance does not appear to have been significantly investigated at that general level. The distance for the subclass defined by the original stabilizer codes, and their subsystem and hybrid generalizations, has been extensively studied as it is central to many applications. A comprehensive analysis of the general code distance above is outside the scope of this work, and so we will leave such a potential examination for elsewhere. We note that the examples we have presented have a code distance simply equal to 1, and so more intricate examples would be required to truly investigate this distance and its properties. 
\end{remark}

\section{Conclusion}

In this work, we have added to the library of quantum error correcting codes, by extending the theory of Clifford codes to the setting of hybrid subspace and subsystem codes. Our generalization was two-fold, in that we have used it to inject projective representation theory more directly into the subject, and we have built it upon the operator algebra quantum error correction framework.

There are a number of potential lines of investigation coming out of this work. Within the class of codes we have introduced themselves, one can ask if there are extensions of other results from Clifford code theory, such as a representation theoretic description of which ones are stabilizer codes. We have included subsystem codes in our formulation, and we noted how one can construct them in our examples, but we have not investigated their properties deeply. We introduced a notion of code distance for these codes, but, as in, it seems, other Clifford code settings, we have not investigated it in detail. This is a direction to pursue for specific code subclasses of interest in applications, with the recent interest in subsystem codes as an example.  One could also ask if there are natural entanglement-assisted versions of these codes \cite{NadkarniAdonsouDauphinaisKribsVasmer2024,defranco2026entanglement}. The operator algebra framework allows for a more natural mix of the various hybrid and non-hybrid code types, and, additionally, it invites consideration of infinite-dimensional generalizations of Clifford codes. We leave these and related investigations for elsewhere. 

\strut

{\noindent}{\it Acknowledgements.}
J.E. acknowledges funding from The Research Council of Norway [Project 345433] and the Norwegian National Security Authority (NSM). An earlier draft of this paper will also appear as part of J.E.'s PhD-thesis completed at the University of Oslo under the supervision of Tron Omland, Erik B\'{e}dos, and Nadia S. Larsen, to whom J.E. is grateful for useful feedback and guidance.
D.W.K. was supported by Canada NSERC Discovery Grant RGPIN-2024-400160.

\bibliographystyle{plainurl}

\bibliography{refs}

\end{document}